\newcommand{\N}{{\mathbb{N}-\{0\}}}
\newcommand{\Z}{{\mathbb{Z}}}
\title{K\"{o}nigsberg Sightseeing: Eulerian Walks in Temporal Graphs}
\author{ }\author{Andrea Marino\inst{1}\and Ana Silva\inst{2}}
\authorrunning{Andrea Marino \and Ana Silva}
\institute{Dipartimento di Sistemi, Informatica, Applicazioni, Universit\`{a} degli Studi di Firenze, Firenze, Italy
\email{andrea.marino@unifi.it}\and
Departamento de Matem\'{a}tica, Universidade Federal do Cear\'{a}, Fortaleza, CE, Brazil
\email{anasilva@mat.ufc.br}}
\begin{document}

\maketitle

\newcommand{\inc}{\textsc{temp}\xspace}
\newcommand{\strinc}{\textsc{strict temp}\xspace}
\newcommand{\pwalk}{\textsc{Eulerian Walk}\xspace}
\newcommand{\pclosedwalk}{\textsc{Eulerian Closed Walk}\xspace}
\newcommand{\plocaltrail}{\textsc{Eulerian Local Trail}\xspace}
\newcommand{\plocaltour}{\textsc{Eulerian Local Tour}\xspace}
\newcommand{\ptrail}{\textsc{Eulerian Trail}\xspace}
\newcommand{\ptour}{\textsc{Eulerian Tour}\xspace}
\newcommand{\walk}{Eulerian walk\xspace}
\newcommand{\localtrail}{Eulerian local trail\xspace}
\newcommand{\localtour}{Eulerian local tour\xspace}
\newcommand{\trail}{Eulerian trail\xspace}
\newcommand{\tour}{Eulerian tour\xspace}

\begin{abstract}

An Eulerian walk (or Eulerian trail) is a walk (resp. trail) that visits every edge of a graph $G$ at least (resp. exactly) once. This notion was first discussed by Leonhard Euler while solving the famous Seven Bridges of K\"{o}nigsberg problem in 1736. What if Euler had to take a bus? In a temporal graph $(G,\lambda)$, with $\lambda: E(G)\to 2^{[\tau]}$, an edge $e\in E(G)$ is available only at the times specified by $\lambda(e)\subseteq [\tau]$, in the same way the connections of the public transportation network of a city or of sightseeing tours are available only at scheduled times. In this scenario, even though several translations of Eulerian trails and walks are possible in temporal terms, only a very particular variation has been exploited in the literature, specifically for infinite dynamic networks (Orlin, 1984). In this paper, we deal with temporal walks, local trails, and trails, respectively referring to edge traversal with no constraints, constrained to not repeating the same edge in a single timestamp, and constrained to never repeating the same edge throughout the entire traversal. We show that, if the edges are always available, then deciding whether $(G,\lambda)$ has a temporal walk or trail is polynomial, while deciding whether it has a local trail is $\NP$-complete even if it has lifetime~2. In contrast, in the general case, solving any of these problems is $\NP$-complete, even under very strict hypothesis.
\end{abstract}

\section{Introduction}

An Eulerian walk (or Eulerian trail) is a walk (resp. trail) that visits every edge of a graph $G$ at least (resp. exactly) once. The Eulerian trail notion was first discussed by Leonhard Euler while solving the famous Seven Bridges of K\"{o}nigsberg problem in 1736, where one wanted to pass by all the bridges over the river Preger without going twice over the same bridge. Imagine now a similar problem, where you have a set of sights linked by possible routes. If the routes themselves are also of interest, a sightseeing tourism company might want to plan visits on different days that cover all the routes. One could do that with no constraints at all (thus performing a walk), or with the very strict constraint of never repeating a route (thus getting a trail), or constraining oneself to at least not repeting the same route on the same day (thus getting what we called a local trail). If we further assume that some routes might not be always accessible, we then get distinct problems defined on temporal graphs.

In a temporal graph $(G,\lambda)$, with $\lambda: E(G)\to 2^{[\tau]}$, an edge $e\in E(G)$ is available only at the times specified by $\lambda(e)\subseteq [\tau]$, in the same way the connections of the public transportation network of a city or of sightseeing tours are available only at scheduled times.
In this scenario, paths and walks are valid only if they traverse a sequence of adjacent edges $e_1, \ldots, e_k$ such that for each $i\in [k-1]$, $\lambda(e_i)\leq \lambda(e_{i+1})$, i.e. whose time sequence is non-decreasing (similarly, strictly increasing sequences, i.e. with $\lambda(e_i)< \lambda(e_{i+1})$, can be considered).

Several translations of Eulerian trails and walks are possible in temporal terms, depending on the constraints we consider. In particular, we study the following variations. Below, all the walks and trails are implicitly  considered to be temporal, as defined in the previous paragraph.

\begin{problem} Given a temporal graph $(G,\lambda)$, we consider the following problems:
\begin{itemize}
\item \pwalk: deciding whether $(G,\lambda)$ has an Eulerian walk, i.e. a walk traversing each edge at least once.
\item \plocaltrail: deciding whether $(G,\lambda)$ has an Eulerian local trail, i.e. a walk traversing each edge at least once and at most once in each time stamp.
\item \ptrail: deciding whether $(G,\lambda)$ has an Eulerian trail, i.e. a walk traversing each edge exactly once.
\end{itemize}
\label{prob:one}
\end{problem}

We also consider the related problems where the walks/trails are closed (first vertex equal to the last one), respectively referring to them as \pclosedwalk, \plocaltour, and \ptour.

The research on temporal graphs have attracted a lot of attention in the past decade (we refer the reader to the surveys~\cite{M15,LVM.18} and the seminal paper~\cite{KKK00}), and they appear also under different names, e.g. as time-varying graphs~\cite{CFQS12}, as evolving networks~\cite{BorgnatFGMRS07}, and as link streams~\cite{LVM.18}. Even after all the received attention, surprisingly enough none of the above problems have been previously considered. 

Concerning \pwalk, one of the closest concepts is the one defined by Orlin in~\cite{orlin1984some}, where he gives a polynomial algorithm to check the existence of an Eulerian closed walk (i.e. a \emph{tour}) in dynamic graphs. However, the dynamic graph model is quite different from the temporal graph model used in this paper, as pointed out in~\cite{M15}. Indeed, looking at the corresponding time-expanded graph, temporal edges can go back in time and the graph is infinite. Nevertheless, the results presented there seemed to point towards the polinomiality of the problems investigated here, as observed in~\cite{M15}: ``the results proved for it [the dynamic graph model] are resounding and possibly give some first indications of what to expect when adding to combinatorial optimization problems a time dimension". We found however that this is not the case, as we will show that even \pwalk turns out to be much harder on temporal graphs (for more details about dynamic graphs and the result proved by Orlin in~\cite{orlin1984some} see Appendix~\ref{app:rel}). Taking inspiration in~\cite{orlin1984some}, we also define a dynamic-based temporal graph as a temporal graph whose edges are always available, and we analyze the complexity of the above problems on these particular instances. 

Still concerning \pwalk, a closely related problem is the \textsc{Temporal Exploration problem} (TEXP)~\cite{michail2016traveling}, which consists of, given a temporal graph $(G,\lambda)$, finding a temporal walk that visits all vertices in $G$ (possibly, more than once) whose arrival time is minimum. In~\cite{michail2016traveling}, they prove that this problem is \NP-complete and even not approximable unless $\P=\NP$; this is in stark contrast with the static version of the problem, which can be trivially solved in linear time. However this is quite different wrt our \pwalk, as we are considering walks passing through all the edges instead of all the vertices, and this difference is crucial as a transformation from TEXP to \pwalk does not seem to be easy. Indeed, by simply transforming each vertex into an edge, we would get two types of edges, connection-edges and vertex-edges, and a negative answer to \pwalk could be just motivated by the impossibility of visiting all the connection-edges, which are actually not required to be visited to solve TEXP. This is not surprising as we are trying to transform a Hamiltonian walk (i.e. a walk passing by each vertex) into an Eulerian walk. We mention that, unlike the problems investigated here, a lot of research has been devoted to temporal node exploration, e.g. bounding the arrival time of such walks in special instances~\cite{erlebach2018faster,Erlebach0K15} and extending previous results in the case of non-strictly increasing paths~\cite{ErlebachS20}.

Concerning \ptrail and \plocaltrail, observe that, when $\tau =1$, then both of them degenerate into the original formulation of the Seven Bridges of K\"{o}nigsberg problem. This is why we think they appear to be more natural adaptations of the static version of the problem. Nevertheless they have never been investigated before, up to our knowledge. 

\paragraph{Our results} Our results are summarized in Table~\ref{tab:results} and detailed in Theorem~\ref{thm:main}.

\begin{theorem}
\label{thm:main}
Given a temporal graph $(G,\lambda)$
\begin{enumerate}
    \item \label{item:one}\pwalk is \NP-complete, even if each snapshot of $(G,\lambda)$ is a forest of constant size, while it is is polynomial for bounded $\tau$. It is also polynomial if $(G,\lambda)$ is dynamic-based. 
    
    \item  \label{item:two}\plocaltrail is \NP-complete for each $\tau\geq 2$ and in the case $(G,\lambda)$ is dynamic-based.
    
    \item \label{item:three}\ptrail is \NP-complete for each $\tau\geq 2$. It is polynomial if $(G,\lambda)$ is dynamic-based.  
\end{enumerate}
Same applies to tours, i.e. \pclosedwalk, \plocaltour, and \ptour.
\end{theorem}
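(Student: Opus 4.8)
The plan is to prove each item by pairing a polynomial-time algorithm for the tractable regimes with a hardness reduction for the others; membership in \NP holds throughout, since a \trail has length exactly $m=|E(G)|$, an \localtrail has length at most $m\tau$, and an \walk admits the polynomial certificate described next. The backbone is a structural observation: since the time sequence of a temporal walk is non-decreasing, the edges traversed at a fixed timestamp $t$ form a contiguous block, hence a connected sub-walk confined to a single component of the snapshot $G_t$; conversely, inside any connected graph one can walk between any two prescribed vertices while traversing every edge. Consequently $(G,\lambda)$ admits an \walk iff one can choose, for each $t\in[\tau]$, a connected component $C_t$ of $G_t$ (or nothing), such that (i) every edge $e$ lies in some $C_t$ with $t\in\lambda(e)$, and (ii) listing the nonempty choices as $C_{t_1},\dots,C_{t_m}$ with $t_1<\cdots<t_m$, consecutive components share a vertex (waiting at that vertex during the idle timestamps in between). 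The closed variant adds the requirement that $C_{t_1}$ and $C_{t_m}$ share the chosen start vertex.

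This characterization settles the tractable cases of \pwalk. For bounded $\tau$, each $G_t$ has at most $n:=|V(G)|$ components, so there are at most $(n+1)^\tau$ admissible selections; enumerating them and testing (i)--(ii)---and the closed condition for \pclosedwalk---in polynomial time decides the problem. For dynamic-based $(G,\lambda)$ every snapshot equals $G$, so all edges can be routed at a single timestamp, whence an \walk (resp. its closed version) exists iff the classical Eulerian (resp. Eulerian closed) walk does, i.e. iff the non-isolated part of $G$ is connected. The same collapse handles \ptrail on dynamic-based instances: any classical Eulerian trail of $G$ yields a temporal trail under the constant time labeling, and conversely a temporal trail is a classical trail using each edge once; hence \ptrail (resp. \ptour) reduces to the classical Eulerian trail (resp. tour) condition, i.e. to connectivity together with the count of odd-degree vertices.

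The hardness results proceed by reduction from \textsc{3-Sat} (or \textsc{Hamiltonian Path} for the trail variants), with gadgets tailored to the structural restriction in force. For \ptrail with $\tau=2$ the monotonicity is the crux: a temporal trail is exactly a classical Eulerian trail of $G$ equipped with one breakpoint that places every edge available only at time~$1$ before every edge available only at time~$2$, so the problem becomes an Eulerian trail subject to a precedence constraint, and the construction forces this order to encode a satisfying assignment. For \plocaltrail with $\tau=2$ the walk is one trail in $G_1$ followed by one trail in $G_2$ chained at a vertex; covering all edges then forces the edges of each snapshot---the exclusive ones together with a chosen subset of the shared ones---to form a single connected subgraph with at most two odd-degree vertices, and the freedom in assigning the shared edges is what carries the satisfiability instance. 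The dynamic-based case of \plocaltrail is the question of whether $E(G)$ can be covered by $\tau$ chained trails (each edge at most once per round), reduced from the same source. For \pwalk every snapshot must remain a forest of constant size, so the hardness is loaded onto the time axis rather than onto any snapshot: variable gadgets recur at prescribed timestamps and offer a binary routing choice (which subtree the token explores), clause edges become coverable only when some literal is routed correctly, and condition (ii) propagates a single consistent assignment through time.

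The main obstacle is the gadget engineering together with the two-way correctness arguments, made delicate by the competing requirements: realizing a hard combinatorial choice while every snapshot stays a constant-size forest (for \pwalk), while $\tau$ is as small as $2$ (for \ptrail and \plocaltrail), or while every edge remains permanently available (dynamic-based \plocaltrail); each reduction must also be padded to every $\tau\geq 2$ without breaking correctness. The final step is to transfer every statement to the closed variants \pclosedwalk, \plocaltour and \ptour: the plan is to append a fixed return gadget forcing the first and last positions to coincide---equivalently, to meet the closed form of condition (ii)---and to verify that open solutions extend to closed ones and conversely, so that all rows of Table~\ref{tab:results} hold verbatim for tours.
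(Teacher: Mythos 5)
Your tractable cases are correct and coincide with the paper's approach: the contiguous-block observation and the component-selection characterization of temporal Eulerian walks are exactly what underlies Lemma~\ref{lem:easy}, and the dynamic-based collapses for \pwalk and \ptrail match Lemma~\ref{lem:pwalkdynamic} and the discussion of the last section. The genuine gap is in the three \NP-hardness claims, which are the substance of the theorem: you state what the gadgets are supposed to achieve (``the construction forces this order to encode a satisfying assignment'', ``the freedom in assigning the shared edges is what carries the satisfiability instance'', ``clause edges become coverable only when some literal is routed correctly'') but give no construction and no two-way correctness argument. For an \NP-completeness statement the reduction \emph{is} the proof, and the constraints you yourself identify as ``the main obstacle'' (constant-size forest snapshots, $\tau=2$, permanent availability) are precisely where the work lies. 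The most delicate case---\plocaltrail/\plocaltour on dynamic-based graphs with $\tau=2$ (Item~\ref{item:two})---cannot even be set up in your framing: you describe the $\tau=2$ situation as ``the exclusive [edges] together with a chosen subset of the shared ones'', but in a dynamic-based graph \emph{every} edge is available at both timestamps, so there are no exclusive edges and $\lambda$ gives you no leverage. The paper needs a dedicated structural lemma (Lemma~\ref{lem:oddVertices}: any \localtour of $(G,[2])$ must visit every odd-degree vertex in both timestamps) to manufacture ``forced edges'' purely combinatorially, and then reduces from NAE 3-\SAT with explicit clause gadgets (Theorem~\ref{thm:localtour}), extending to all fixed $\tau\ge 2$ and to open trails via a separate star-gadget argument (Corollary~\ref{cor:local}).

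A second concrete problem is your proposed source problem for the trail variants: reducing from \textsc{Hamiltonian Path}. The paper explicitly flags why this route fails in its natural form: for cubic $G$, Hamiltonicity of $G$ is \emph{necessary} but \emph{not sufficient} for $(G,[2])$ to admit an \localtour---Figure~\ref{fig:outerplanar} exhibits a $2$-connected outerplanar cubic (hence Hamiltonian) graph whose $(G,[2])$ has no \localtour---so the identity-style correspondence only holds in one direction and an \emph{ad hoc} reduction is required. The paper instead proves Item~\ref{item:three} by adapting the NAE 3-\SAT construction, replacing each forced-edge gadget by two paths of length~2 whose edges are active only at time~1 and only at time~2 respectively; your reformulation of a $\tau=2$ temporal trail as ``an Eulerian trail with a precedence constraint'' is accurate but, without a construction realizing hard instances of that constraint, it does not establish hardness.
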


Theorem~\ref{thm:main} gives a complete taxonomy of our problems, also focusing on the possibility of getting polynomial algorithms when we have a small lifetime $\tau$. In particular, for \ptrail and \plocaltrail, since they become polynomial when $\tau=1$, the bound for $\tau$ is optimal, giving us a complete dichotomy with respect to the lifetime of $(G,\lambda)$. In contrast \pwalk is easily solvable for every fixed $\tau$, showing that walks are easier than trails even on the temporal context. 

It is important to remark that none of the above variations immediately implies any of the others. We will show indeed that the property of being Eulerian for the static base graph $G$ is in general a necessary but not sufficient condition for the existence of an \trail, becoming sufficient only if we restrict to dynamic-based temporal graphs. In the case of \localtrail, we will see that this property is not even necessary.

Finally, as a by product of our reductions we get the following result about static graphs, which can be of independent interest.
\begin{corollary}
Given a graph $G$, deciding whether all the edges of $G$ can be covered with two trails is \NP-complete.
\label{cor:cover}
\end{corollary}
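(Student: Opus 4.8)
The plan is to show the problem lies in \NP{} and then is \NP-hard. Membership is immediate: a certificate consists of the two edge sequences describing the trails, each of length at most $|E(G)|$; in polynomial time one checks that each sequence is a trail (consecutive edges share a vertex and no edge occurs twice within the sequence) and that every edge of $G$ appears in at least one of the two sequences. So the work is entirely in the hardness reduction.

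For hardness I would reduce from \ptrail restricted to lifetime $\tau=2$, which is \NP-complete by Theorem~\ref{thm:main}(\ref{item:three}). The guiding observation is that in a temporal graph of lifetime $2$ a temporal \trail is forced to traverse, in this order, a trail $W_1$ using exactly the edges assigned to time $1$ followed by a trail $W_2$ using exactly the edges assigned to time $2$; since every edge is used exactly once, $E(W_1)$ and $E(W_2)$ \emph{partition} $E(G)$, and $W_1,W_2$ meet at the single vertex where the traversal switches from time $1$ to time $2$. Thus the temporal problem is really ``partition $E(G)$ into two trails that meet at a common vertex, subject to the availability constraints $\lambda(e)\in\{\{1\},\{2\},\{1,2\}\}$.''

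Starting from an instance $(H,\lambda)$, I would build a static graph $H'$ so that $H'$ admits a two-trail cover exactly when $(H,\lambda)$ admits a temporal \trail. The forward direction is the observation above. The subtle and essential design principle is the following: because partitioning a graph into two trails is \emph{polynomial} (by the trail-decomposition theorem, one only counts odd-degree vertices per component), the hardness of the static cover problem cannot possibly come from a partition, so it must be funnelled entirely through the freedom to let the two trails \emph{share} edges. I would therefore encode the temporal ``meeting'' requirement by a shareable connector gadget: an edge (or tiny subgraph) that can be covered only if both trails visit one designated switch vertex, so that the overlap of the two trails plays the role of the switch of the temporal walk. The availabilities would be encoded by local gadgets attached to each edge that restrict which of the two trails is allowed to cover it, mirroring the sets $\{1\}$, $\{2\}$, and $\{1,2\}$.

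The hard part will be calibrating this overlap mechanism so that it simulates the temporal timing and meeting constraint \emph{exactly}: the connector must force any cover of $H'$ to send both trails through the switch vertex (and no spurious covers of other kinds should appear), while the edge gadgets must faithfully transmit the availability restrictions. Concretely, the delicate point is proving the converse direction, namely that every two-trail cover of $H'$ can be reorganized into an ordered, availability-respecting pair of trails meeting at a vertex, and hence into a genuine temporal \trail of $H$. Since partition is easy and cover is what we claim is hard, the whole reduction hinges on showing that the gadget neither dissipates this hardness (by admitting an overlap-free partition that short-circuits the instance) nor manufactures it (by admitting covers with no temporal counterpart). Once this equivalence between two-trail covers of $H'$ and temporal \trail{}s of $(H,\lambda)$ is established, \NP-hardness follows from Theorem~\ref{thm:main}(\ref{item:three}), and together with membership this gives \NP-completeness.
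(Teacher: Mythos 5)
Your \NP{} membership argument is fine, but the hardness half is a plan rather than a proof, and the plan has concrete obstacles. The two devices everything rests on --- the ``shareable connector gadget'' and the per-edge gadgets ``that restrict which of the two trails is allowed to cover'' an edge --- are never constructed, and their correctness is explicitly deferred. Moreover, as described they cannot exist: in a static two-trail cover the trails form an \emph{unordered} pair (any cover remains a cover after swapping the two trails' names), so no gadget can tie an edge to ``the time-1 trail''; the most a gadget can enforce is that two edges lie in the same trail or in different trails, a weaker primitive your sketch never addresses. Second, a temporal \trail of lifetime~2 requires the time-1 trail to \emph{end} at the very vertex where the time-2 trail \emph{starts}. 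Forcing both trails of a cover to pass through a common switch vertex is strictly weaker: a trail passing through $v$ need not have $v$ as an endpoint, and it cannot be cut off at $v$ without uncovering edges, so the ``reorganization'' of an arbitrary cover into an ordered pair meeting at a vertex --- which you yourself flag as the delicate point --- is exactly the step that is missing. Third, since \ptrail uses every edge exactly once, the intended image of a yes-instance under your reduction is essentially a partition; this sits uneasily with your own (correct) observation that partitioning into two trails is polynomial, so that hardness must be carried by overlap, and the proposal never reconciles the two.

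For comparison, the paper does not reduce from the temporal problem at all: Corollary~\ref{cor:cover} is a one-line byproduct of the NAE 3-\SAT reduction of Theorem~\ref{thm:localtour}, obtained by adding two new degree-one vertices adjacent to the vertex $t$ of that construction. A pendant edge can only be traversed by a trail that ends at its degree-one endpoint, so the four degree-one vertices ($s_1$, $s_2$, and the two new ones) pin down all four endpoints of the two covering trails; after that, the parity argument (I), the gadget argument (II), and the forced-edge analysis of Theorem~\ref{thm:localtour} apply essentially unchanged, giving equivalence with NAE satisfiability directly. If you want to rescue your route, this is the lesson to import: endpoints of trails must be controlled by degree-one vertices (or by odd-degree arguments as in Lemma~\ref{lem:oddVertices}), not by ``meeting'' gadgets, because trail endpoints --- not visited vertices --- are the only thing a cover argument can pin down.
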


\begin{table}[t]
    \centering
    \begin{tabular}{|c|c|c|c|}
    \hline
          &  \pwalk & \plocaltrail & \ptrail\\
         \hline
         \multirow{2}{*}{\textsc{General}} & \NP-c for $\tau$ unbounded & \NP-c [from below] & \NP-complete \\
          & Poly for $\tau$ fixed  & & for $\tau=2$\\
          \hline
         \textsc{Dynamic-Based} & Poly$^\dagger$ & \NP-c for $\tau=2$ & Poly$^\star$ \\         
         \hline
    \end{tabular}
    \caption{\textbf{Our results concerning Problem~\ref{prob:one}.} For general temporal graphs (first row) and for dynamic-based temporal graphs (second row). $^\dagger$ corresponds to deciding whether $G$ is connected. $^\star$ corresponds to deciding whether $G$ has an Eulerian trail. }
    \label{tab:results}
\end{table}

\paragraph{Further Related Work.}

Other than the papers on TEXP previously mentioned, there is a vast literature about finding special paths or walks in temporal graphs, and some interesting papers include~\cite{CasteigtsHMZ20,WCHLX14,SQFCA11,michail2016traveling}. Also, it is largely known that a static graph $G$ has an Eulerian tour (trail) if and only if $G$ has at most one non-trivial component and all the vertices have even degree (at most two vertices have odd degree). A graph is called Eulerian if it has an Eulerian tour.

When considering dynamic-based temporal graphs, as edges are assumed to be always available during the lifetime $\tau$, we could relate our problems to several other problems on static graphs.
A closely related one would be the Chinese Postman problem, where the edges of the graph have positive weights and one wants to find an Eulerian closed walk on $G$ with minimum weight; in other words, one wants to add copies of existing edges in order to obtain an Eulerian graph of minimum sum weight. Even if we regard the Chinese Postman problem where the weights are all equal to~1, this is very different from our approach since for us, repetition of a long common trail in different snapshots does not make the solution worse, while it would considering the Chinese Postman problem. It is easy to see though that the solution for the Chinese Postman would give us an upper bound for the amount of time spent on an Eulerian local tour of a dynamic-based graph, as we could start a new trail on a new snapshot whenever an edge repetition was detected. The Chinese Postman problem is largely known to be polynomial~\cite{gljan1962graphic}, and some variations that take time into consideration have been investigated, mostly from the practical point of view (see e.g.~\cite{ccodur2020time,sun2011dynamic,wang2002time}), but none of which is equivalent to our problem.

The problem of trying to obtain an Eulerian subgraph (as opposed to a supergraph) has also been studied. In~\cite{cygan2014parameterized}, the authors study a family of problems where the goal is to make a static graph Eulerian by a minimum number of deletions. They completely classify the parameterized complexity of various versions of the problem: vertex or edge deletions, undirected or directed graphs, with or without the requirement of connectivity. Also in~\cite{fomin2014long}, the parameterized complexity of the following Euler subgraph problems is studied: (i) Largest Euler Subgraph: for a given graph $G$ and integer parameter $k$, does $G$ contain an induced Eulerian subgraph with at least $k$ vertices?; and (ii) Longest Circuit: for a given graph $G$ and integer parameter $k$, does $G$ contain an Eulerian subgraph with at least $k$ edges? 

\plocaltrail on dynamic-based graphs is actually more closely related to the problem of covering the edges of a graph with the minimum number of (not necessarily disjoint) trails, whereas the aforementioned problems are more concerned with either minimizing edge repetitions or maximizing the subgraph covered by a single trail. Even if the trail cover problem can be so naturally defined and involve such a basic structure as trail, up to our knowledge it has not yet been previously investigated. Note that \plocaltrail is slightly different from trail cover, since we also require that together the trails form a walk. In any case, a small modification of our proof of Theorem~\ref{thm:main}.\ref{item:two} implies that deciding whether the edges of a graph can be covered with at most two trails is $\NP$-complete (Corollary~\ref{cor:cover}). Interestingly enough, the vertex version of this problem, namely the path cover problem, has been largely investigated (see e.g. \cite{arumugam2013decomposition,gomez2018covering,manuel2018revisiting}).

\paragraph{Preliminaries.}

We use and extend the notation in~\cite{M15}. A temporal graph is a graph together with a function on the edges saying when each edge is active; more formally, a \emph{temporal graph} is a pair $(G,\lambda)$, where $\lambda:E(G)\rightarrow 2^\N$. Here, we consider only finite temporal graphs, i.e., graphs such that $\max\bigcup_{e\in E(G)}\lambda(e)$ is defined. This value is called the \emph{lifetime of $(G,\lambda)$} and denoted by $\tau$. Given $i\in [\tau]$, we define the \emph{snapshot $G_i$} as being the subgraph of $G$ containing exactly the edges active in time $i$; more formally, $V(G_i)=V(G)$ and $E(G_i) = \{e\in E(G)\mid i\in \lambda(e)\}$.

Given vertices $v_0,v_k$ in a graph $G$, a \textit{$v_0,v_k$-walk} in $G$ is an alternating sequence $(v_0, e_1, v_1,  \ldots, e_k, v_k)$ of vertices and edges such that $e_i$ goes from $v_{i-1}$ to $v_{i}$ for $i\in \{1, \ldots, k\}$. We define a walk in a temporal graph similarly, except that a walk cannot go back in time. More formally, given a temporal graph $(G,\lambda)$ and a $v_0,v_k$-walk $W = (v_0, e_1, v_1,  \ldots, e_k, v_k)$, we say that $W$ is a \textit{temporal $v_0,v_k$-walk} if $\lambda(e_1)\le \lambda(e_2)\le \ldots\le\lambda(e_k)$. It is \emph{closed} if it starts and finishes on the same vertex of $G$, i.e., if $v_0=v_k$.  

We say that a temporal walk $W$ is a \emph{local trail} if there are no two occurrences of the same edge of $G$ in the same snapshot, i.e., if $W$ restricted to $G_i$ is a trail in $G$ for every $i\in [\tau]$. We say that $W$ is a \emph{trail} if there are no two occurrences of the same edge of $G$ in $W$. A closed (local) trail is also called a \emph{(local) tour}. Finally, a temporal walk $W$ is called \emph{Eulerian} if at least one copy of each edge of $G$ appears at least once in $W$. Observe that, by definition, an Eulerian trail visits every edge exactly once.

A \emph{dynamic-based graph} is a temporal graph $(G,\lambda)$ where the edges are always available.\footnote{This is the reason why we use the term dynamic-based, as they are similar to the dynamic networks used in~\cite{orlin1984some} when studying Eulerian trails, except that edges cannot go back in time and the lifetime is finite.} We denote a dynamic-based graph simply by $(G,[\tau])$ where $\tau$ is the lifetime of the temporal graph. 

\section{Eulerian Walk}

In this section we focus on \pwalk, i.e. deciding if there is a temporal walk passing by each edge at least once, proving the results in Item~\ref{item:one} in Theorem~\ref{thm:main}, summarized in the first column of Table~\ref{tab:results}.

In particular, a preliminary result, whose formal proof is reported in Appendix~\ref{sec:easy}, concerns the case where the lifetime $\tau$ is bounded. It consists basically of checking whether there is a choice of connected components $H_1, \ldots, H_\tau$, one for each timestamp $i$, that together cover all the edges of $G$ and is such that $H_i$ intersects $H_{i+1}$, for each $i\in [\tau-1]$.

\begin{lemma}
Given a temporal graph $(G,\lambda)$ with fixed lifetime $\tau$, solving \pwalk on $(G,\lambda)$ can be done in time $O((n+m)\cdot n^{\tau-1})$, where $n = |V(G)|$ and $m = |E(G)|$.%admits an Eulerian temporal walk is polynomial.
\label{lem:easy}
\end{lemma}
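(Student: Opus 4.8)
The plan is to characterize the existence of an Eulerian walk in $(G,\lambda)$ via a search over a bounded number of ``component choices,'' one per timestamp, exactly along the lines sketched just before the statement. The key observation is that a temporal walk is a concatenation of sub-walks, where the $i$-th sub-walk lives entirely inside the snapshot $G_i$ (more precisely, inside the subgraph of $G$ consisting of edges available at time $i$), and consecutive sub-walks must share the vertex at which we pass from time $i$ to time $i+1$. Since each sub-walk stays within a single connected component of its snapshot, the essential data of an Eulerian walk is the sequence $H_1,\dots,H_\tau$ of connected components, where $H_i$ is a component of $G_i$, subject to two conditions: (i) $V(H_i)\cap V(H_{i+1})\neq\emptyset$ for each $i\in[\tau-1]$, so that we can legally transition from the walk in time $i$ to the walk in time $i+1$; and (ii) $\bigcup_{i=1}^\tau E(H_i)=E(G)$, so that every edge is covered. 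I would prove that $(G,\lambda)$ admits an Eulerian walk if and only if such a sequence exists.

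First I would establish the easy direction: given an Eulerian walk $W=(v_0,e_1,v_1,\dots,e_k,v_k)$, the non-decreasing time condition $\lambda(e_1)\le\cdots\le\lambda(e_k)$ lets me split $W$ at the timestamp boundaries into consecutive blocks, the $i$-th block using only edges available at time $i$. Each block is a walk, hence contained in one connected component $H_i$ of $G_i$; the splitting vertex between the block for time $i$ and that for time $i+1$ witnesses $V(H_i)\cap V(H_{i+1})\neq\emptyset$; and since $W$ is Eulerian, every edge appears in some block, giving $\bigcup_i E(H_i)=E(G)$. For the converse, given a valid sequence $H_1,\dots,H_\tau$, I would build the walk greedily: within each $H_i$ I can traverse all of its edges by a closed or nearly-closed walk starting and ending at prescribed boundary vertices (a connected graph always admits a walk covering all its edges between any two chosen vertices, e.g.\ by doubling edges along a spanning tree), and condition (i) lets me stitch the block for $H_i$ to the block for $H_{i+1}$ at a common vertex, producing a single temporal walk that covers $E(G)$.

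For the running time, I would bound the search for a valid sequence. Each $H_i$ is determined by choosing one vertex of $G_i$ as a ``representative'' of a component, so there are at most $n$ choices per timestamp and $n^\tau$ sequences in total; but condition (i) lets me fix the first component arbitrarily and then, for the transition, I argue there are only $n$ relevant choices at each subsequent step, giving the $n^{\tau-1}$ factor in the bound. For each candidate sequence, verifying conditions (i) and (ii)---computing components, checking intersections, and checking edge coverage---costs $O(n+m)$, yielding the stated $O((n+m)\cdot n^{\tau-1})$. More carefully, I would precompute the connected components of each snapshot once in total time $O(\tau(n+m))$, and then the per-sequence check reduces to constant-or-linear bookkeeping over the $\tau$ chosen components.

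The main obstacle I anticipate is the converse (realizability) direction, specifically arguing that a valid component sequence genuinely yields a single non-decreasing temporal walk rather than $\tau$ disconnected pieces. The subtlety is that the walk within $H_i$ must both begin at the vertex shared with $H_{i-1}$ and end at the vertex shared with $H_{i+1}$, and I must be sure such a connecting walk exists inside $H_i$ covering all its edges; this is where connectivity of $H_i$ is used, since in a connected graph one can always find a walk between any two specified vertices that traverses every edge (allowing repetition, which is permitted for walks). A secondary care point is making the counting argument for the $n^{\tau-1}$ bound fully rigorous: I must ensure the enumeration does not silently blow up to $n^\tau$, which is handled by observing that once $H_1$ is chosen, each successive $H_{i+1}$ need only be selected from components meeting $V(H_i)$, but since a clean bound requires care, it is cleanest to phrase it as choosing, for each $i\ge 2$, a vertex that is shared with the previous component, giving at most $n$ choices per step after the first.
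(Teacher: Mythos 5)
Your characterization and your plan to prove it in both directions match the paper's proof: the paper reduces the problem to exactly the same question, namely whether there exist connected components $H_1,\dots,H_\tau$, one per snapshot, that cover $E(G)$ and satisfy $V(H_i)\cap V(H_{i+1})\neq\emptyset$ for all $i\in[\tau-1]$ (the paper asserts this reduction without proof, so your splitting/stitching argument fills in detail rather than deviating). The realizability direction you worried about is fine for the reason you give: inside a connected $H_i$ one can walk between any two prescribed vertices while covering all edges, and all edges of $H_i$ are available at time $i$.

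The gap is in the counting, and it is exactly the point you flagged as a ``secondary care point.'' The claim that ``condition (i) lets me fix the first component arbitrarily'' is false: $H_1$ is constrained by coverage, not only by the intersection condition. Concretely, take $V(G)=\{a,b,c,d\}$ with $\lambda(ab)=\{1,2\}$, $\lambda(cd)=\{1\}$, $\lambda(bc)=\{2\}$. The walk $d\to c\to b\to a$ (edge $cd$ at time $1$, then $bc$ and $ab$ at time $2$) is Eulerian, and the only valid sequence is $H_1=\{c,d\}$, $H_2=\{a,b,c\}$; if you fix $H_1$ to be the component $\{a,b\}$ of $G_1$, the edge $cd$ can never be covered and your algorithm answers NO on a YES-instance. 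Your fallback phrasing---enumerate $H_1$, then for each $i\ge 2$ a vertex shared with the previous component---is sound but counts $n\cdot n^{\tau-1}=n^{\tau}$ sequences, a factor $n$ above the claimed bound. So as written you either have an unsound algorithm with the right bound, or a sound one with the wrong bound. The missing observation, which is how the paper gets $n^{\tau-1}$, is that the first shared vertex determines $H_1$ as well: enumerate only the $\tau-1$ connection vertices $v_1,\dots,v_{\tau-1}\in V(G)$, define $H_1$ as the component of $G_1$ containing $v_1$ and, for $i\in[\tau-1]$, define $H_{i+1}$ as the component of $G_{i+1}$ containing $v_i$; then check consistency ($v_i\in V(H_i)$ for $2\le i\le\tau-1$) and edge coverage. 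This is $n^{\tau-1}$ tuples with an $O(\tau(n+m))$ test each, i.e., $O((n+m)\cdot n^{\tau-1})$ for fixed $\tau$, as claimed.
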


In the following, we show that when $\tau$ is unbounded, deciding whether $(G,\lambda)$ admits an \walk is $\NP$-complete by reducing from 3-\SAT. This is best possible because of the above lemma. 

\begin{theorem}
Given a temporal graph $(G,\lambda)$, deciding whether $(G,\lambda)$ admits an \walk is $\NP$-complete, even if each snapshot of $(G,\lambda)$ is a forest of constant size.
\label{thm:walk}
\end{theorem}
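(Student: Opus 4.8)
The plan is to prove \NP-membership by a short-certificate argument and \NP-hardness by reduction from 3-\SAT.

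For membership, note that although an \walk may a priori be long, within a single snapshot $G_i$ the active edges form a forest of constant size $c$, so a covering walk never needs to traverse more than a constant number of edges while the clock stays at $i$: any sub-walk confined to $G_i$ can be shortened to at most $2c$ edge-traversals while preserving its entry vertex, its exit vertex, and the set of edges it covers at time $i$. Hence a minimal \walk has length $O(\tau)$, which is polynomial, giving a certificate checkable in polynomial time by verifying adjacency, non-decreasing time stamps, and that every edge occurs.

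For hardness, given a formula $\phi$ with variables $x_1,\dots,x_n$ and clauses $C_1,\dots,C_m$, I would build $(G,\lambda)$ in which the walk is forced to sweep once through the variables in the order $x_1,\dots,x_n$, committing to a truth value for each. The key design decision, dictated by the requirement that every snapshot be a small forest, is that the walk carries no state other than its current vertex and current time; consequently every clause must be discharged while the walk is reading the single variable occurrence that could satisfy it. Concretely, variable $x_i$ is handled in a short block of time stamps by a constant-size tree offering a true-branch and a false-branch, whose edges are made active at \emph{different} time stamps so that no snapshot ever closes a cycle; the branch taken is recorded by the vertex at which the walk leaves the block, and a backbone is arranged so that the next block is reachable from either exit. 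For each clause $C_j$ I would attach a mandatory clause-edge $P_j$ that the walk can traverse only while passing through a vertex encoding a literal of $C_j$ set to true, making $P_j$ available exactly at the time stamps of its at most three literal occurrences.

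Correctness would then follow in both directions: from a satisfying assignment one reads off the branches to take, and every $P_j$ gets covered at the occurrence of some literal the assignment satisfies; conversely, the exit vertices of a covering \walk define an assignment, and since each $P_j$ was covered, each clause has a satisfied literal. Consistency of a repeated variable is automatic, as $x_i$ is committed exactly once. The main obstacle I anticipate is the clause gadget: realizing the ``covered iff some literal is true'' disjunction with a single traversable edge whose endpoints and availability respect, at \emph{every} time stamp, both acyclicity and the constant-size bound on the snapshot --- in particular ensuring that the edges needed to reach $P_j$ do not themselves provide an unconditional way to cover it, and that committing to a value never strands the walk away from an edge it still must traverse. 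Pinning down this gadget, and checking the forest condition snapshot by snapshot, is where the real work lies.
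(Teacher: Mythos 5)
Your NP-membership argument is fine (the paper does not even bother to spell one out), and your high-level plan for hardness coincides with the paper's: reduce from 3-\SAT, sweep through the variables across consecutive timestamps, commit to a truth value per variable, and make each clause correspond to a mandatory edge that is active only at the timestamps of its literal occurrences. However, there is a genuine gap, which you yourself flag: the clause gadget --- the mechanism guaranteeing that the clause edge can be covered \emph{only} when the walk has committed to a true literal --- is never constructed, and neither is the mechanism by which the infrastructure edges (both branches of every variable gadget, the backbone) get covered after the walk has committed to one branch per variable. These are not routine details to be checked later; they are the entire content of the reduction. Moreover, your sketch as stated risks failing soundness: in a \walk repetition is free, so if a variable gadget is connected inside a snapshot, the walk can wander through both the true-branch and the false-branch and cover clause edges of both polarities, letting unsatisfiable formulas admit covering walks. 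This is precisely the obstacle you describe as ``where the real work lies,'' and it is left unresolved.

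The paper closes this gap with one structural observation that is absent from your proposal: within a single snapshot, a temporal walk can cover \emph{exactly one} connected component --- all of it (since edges may be repeated), but nothing outside it. Commitment is therefore enforced by \emph{disconnection}, not by routing: the snapshot for variable $x_i$ consists of two components, a star centered at a vertex for $x_i$ and another centered at a vertex for $\overline{x}_i$, each carrying as pendant edges the clause edges $a_jb_j$ of the clauses containing that literal; an intervening ``transfer'' snapshot (a star at a hub $T$) lets the walk choose which component to enter next, and that choice \emph{is} the truth assignment, so a clause edge is coverable if and only if some chosen component contains it. The leftover infrastructure edges are covered in later snapshots by repeating the same pattern, which is sound because the clause edges are not active there; and every snapshot stays a forest of constant size because the literal stars have bounded degree once one invokes the \NP-completeness of 3-\SAT with each variable occurring at most three times. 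Without an analogue of this component argument, both directions of your correctness claim remain conditional on a gadget you have not exhibited, so the proposal is a plan rather than a proof.
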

\begin{proof}
We make a reduction from 3-\SAT. Let $\phi$ be a 3-CNF formula on variables $\{x_1,\cdots, x_n\}$ and clauses $\{c_1,\cdots,c_m\}$, and construct $G$ as follows. For each clause $c_i$, add vertices $\{a_i,b_i\}$ to $G$ and edge $a_ib_i$. Now consider a variable $x_i$, and let $c_{i_1},\cdots, c_{i_p}$ be the clauses containing $x_i$ positively, and $c_{j_1},\cdots, c_{j_q}$ be the clauses containing $x_i$ negatively. Add two new vertices $x_i,\overline{x}_i$ to $G$, and edges $\{x_ia_{i_k}\mid k\in [p]\}\cup \{\overline{x}_ia_{j_k}\mid k\in [q]\}$; denote the spanning subgraph of $G$ formed by these edges by $H_i$, and let $H'_i$ be equal to $H_i$ together with edges $\{a_ib_i\mid i\in \{i_1,\cdots,i_p,j_1,\cdots, j_q\}$. We can suppose that $\{i_1,\cdots, i_p\}\cap \{j_1,\cdots, j_q\}=\emptyset$ as otherwise the clauses in the intersection would always be trivially valid; thus we get that $H_i,H'_i$ are forests. 
Finally, add a new vertex $T$ and make it adjacent to every vertex in $\{x_i,\overline{x}_i\mid i\in[n]\}$.

We now describe the snapshots of $(G,\lambda)$. See Figure~\ref{fig:walkGadget} to follow the construction. We first build 2 consecutive snapshots in $(G,\lambda)$ related to $x_i$, for each $i\in [n]$. The first one is equal to $H'_i$, and the second one contains exactly the edges $\{Tx_i,T\overline{x}_i,Tx_{i+1},T\overline{x}_{i+1}\}$ if $i<n$, and if $i=n$, then the second snapshot is equal to $G-\{a_jb_j\mid j\in[m]\}$; this can be done because this subgraph is connected. Denote by $S_i^1,S_i^2$ the first and second snapshot of $x_i$, for each $i\in [n]$. Put these snapshots consecutively in timestamps $1$ through $2n$, in the order of the indexing of the variables. For now, observe that only the last snapshot might not be a forest; this will be fixed later. We now prove that $\phi$ is a satisfiable formula if and only if $(G,\lambda)$ admits an \walk. 

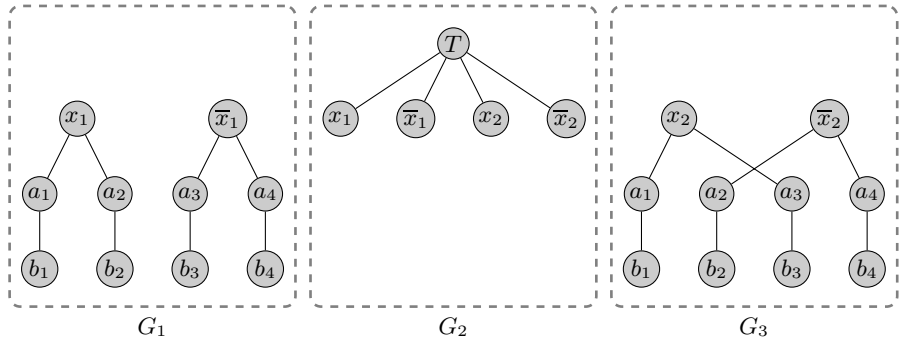
\begin{figure}[h]
\begin{center}
  \begin{tikzpicture}[scale=1]
  \pgfsetlinewidth{1pt}
  \pgfdeclarelayer{bg}    % declare background layer
   \pgfsetlayers{bg,main}  % set the order of the layers (main is the standard layer)

  \tikzset{vertex/.style={circle, minimum size=0.1cm, draw, inner sep=1pt, fill=black!20, ,thin}}
  \tikzset{snapshot/.style ={draw=black!50, rounded corners, dashed, minimum height=8mm, minimum width=2.3cm} }
  \tikzset{subgraph/.style ={draw=black!50, circle, draw, minimum width=1cm, yscale=2, fill=white} }

    %draw the snapshots boxes
    \node [label=-90:$G_1$, draw=black!50, rounded corners, dashed, minimum height=4cm, minimum width=3.8cm] at (0,0){};
    \node [label=-90:$G_2$, draw=black!50, rounded corners, dashed, minimum height=4cm, minimum width=3.8cm] at (4,0){};
    \node [label=-90:$G_3$, draw=black!50, rounded corners, dashed, minimum height=4cm, minimum width=3.8cm] at (8,0){};

    %draws graph inside snapshot 1    
    \node[vertex] (x1) at (-1,0.5) {$x_1$};
    \node[vertex] (a1) at (-1.5,-0.5) {$a_1$};
    \node[vertex] (a2) at (-0.5,-0.5) {$a_2$};
    \node[vertex] (b1) at (-1.5,-1.5) {$b_1$};
    \node[vertex] (b2) at (-0.5,-1.5) {$b_2$};
    
    \node[vertex] (nx1) at (1,0.5) {$\overline{x}_1$};
    \node[vertex] (a3) at (0.5,-0.5) {$a_3$};
    \node[vertex] (a4) at (1.5,-0.5) {$a_4$};
    \node[vertex] (b3) at (0.5,-1.5) {$b_3$};
    \node[vertex] (b4) at (1.5,-1.5) {$b_4$};
    \begin{pgfonlayer}{bg}    % select the background layer
        \draw (x1)--(a1)--(b1) (x1)--(a2)--(b2) (nx1)--(a3)--(b3) (nx1)--(a4)--(b4);
    \end{pgfonlayer}
    
    %draws graph inside snapshot 2  
    \node[vertex, shift={(4,0)}] (x1) at (-1.5,0.5) {$x_1$};
    \node[vertex, shift={(4,0)}] (nx1) at (-0.5,0.5) {$\overline{x}_1$};
    \node[vertex, shift={(4,0)}] (x2) at (0.5,0.5) {$x_2$};
    \node[vertex, shift={(4,0)}] (nx2) at (1.5,0.5) {$\overline{x}_2$};
    \node[vertex, shift={(4,0)}] (T) at (0,1.5) {$T$};    
    \begin{pgfonlayer}{bg}    % select the background layer
        \draw (x1)--(T)--(nx1) (x2)--(T)--(nx2);
    \end{pgfonlayer}
  
    %draws graph inside snapshot 3   
    \node[vertex, shift={(8,0)}] (x1) at (-1,0.5) {$x_2$};
    \node[vertex, shift={(8,0)}] (a1) at (-1.5,-0.5) {$a_1$};
    \node[vertex, shift={(8,0)}] (a2) at (-0.5,-0.5) {$a_2$};
    \node[vertex, shift={(8,0)}] (b1) at (-1.5,-1.5) {$b_1$};
    \node[vertex, shift={(8,0)}] (b2) at (-0.5,-1.5) {$b_2$};
    
    \node[vertex, shift={(8,0)}] (nx1) at (1,0.5) {$\overline{x}_2$};
    \node[vertex, shift={(8,0)}] (a3) at (0.5,-0.5) {$a_3$};
    \node[vertex, shift={(8,0)}] (a4) at (1.5,-0.5) {$a_4$};
    \node[vertex, shift={(8,0)}] (b3) at (0.5,-1.5) {$b_3$};
    \node[vertex, shift={(8,0)}] (b4) at (1.5,-1.5) {$b_4$};
    \begin{pgfonlayer}{bg}    % select the background layer
        \draw (x1)--(a1)--(b1) (x1)--(a3)--(b3) (nx1)--(a2)--(b2) (nx1)--(a4)--(b4);
    \end{pgfonlayer}

  \end{tikzpicture}

\end{center}
\caption{First three snapshots of the construction. In this example, we have $c_1$ containing $(x_1\vee x_2)$, $c_2$ containing $(x_1\vee \overline{x}_2)$, $c_3$ containing $(\overline{x}_1\vee {x}_2)$, and $c_4$ containing $(\overline{x}_1\vee \overline{x}_2)$.}\label{fig:walkGadget}
\end{figure}

Observe that, since we are dealing with a walk, we are allowed to repeat edges as many times as we want; hence, if we visit any vertex inside a component of a snapshot, we can also visit the entire component. At the same time, it is not possible to visit more than one component within a snapshot. This is implicitly used below.

First consider a satisfying assignment of $\phi$. Now, construct an \walk as follows. Start by visiting all the edges in the component of $H'_1$ containing $x_1$, if $x_1$ is true, or the one containing $\overline{x}_1$, otherwise. Then, in $S^2_1$, jump to $x_2$ if $x_2$ is true, or to $\overline{x}_2$, otherwise. Repeat the process until reaching $S^1_n$, and at the last snapshot, visit all the edges in $G-\{a_jb_j\mid j\in[m]\}$. Because each clause $c_i$ contains at least one true variable, we know that edge $a_ib_i$ is visited.

Now, consider an \walk $W$ of $(G,\lambda)$, and denote by $W_i^j$ the walk $W$ restricted to $S^j_i$, for each $i\in [n]$ and $j\in [2]$. We set $x_i$ to true if $W_i^1$ contains $x_i$, and to false otherwise. Now, consider a clause $c_i$ containing variables $x_{k_1},x_{k_2},x_{k_3}$. Because $a_ib_i$ appears only in snapshots $S^1_{k_1},S^1_{k_2},S^1_{k_3}$, and only in the component containing the literal that appears in $c_i$, we get that at least one of the three literals must be set to true.

Finally, observe that we could repeat the same pattern as the one in the first $2n-1$ snapshots in order to visit the remaining edges in $G-\{a_jb_j\mid j\in[m]\}$. As long as the edges in $\{a_ib_i\mid i\in[m]\}$ appear only in the first $2n-1$ snapshots, the same argument as before still applies, and we get the further constraint that each snapshot is a forest. Additionally, they can also be considered to have constant size since 3-\SAT is $\NP$-complete even if each variable appears at most three times~\cite{dahlhaus1994complexity}.
\end{proof}

Now, if we consider a dynamic-based graph $(G,\lambda)$, since all the edges are active throughout its lifetime, we clearly have that there exists an \walk if and only if $G$ is connected, as highlighted by the following Lemma.

\begin{lemma}
\pwalk is polynomial for dynamic-based temporal graphs.
\label{lem:pwalkdynamic}
\end{lemma}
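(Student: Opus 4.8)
The plan is to prove the biconditional that a dynamic-based temporal graph $(G,[\tau])$ admits an \walk if and only if $G$ is connected, from which polynomiality follows immediately since connectivity is checkable in $O(n+m)$ time via a single graph traversal. The key observation driving the whole argument is that in a dynamic-based graph every edge is available at every timestamp, so the non-decreasing time constraint $\lambda(e_i)\le \lambda(e_{i+1})$ never actually restricts which consecutive traversals are allowed: we may assign every edge the same timestamp (say all of $[\tau]$), and any sequence of adjacent edges is automatically a valid temporal walk. This collapses the temporal structure entirely and reduces the question to a purely static one.

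First I would establish the forward direction. Suppose $(G,[\tau])$ admits an \walk $W$. Since $W$ traverses every edge of $G$ at least once and a walk, by definition, is a connected alternating sequence of vertices and edges, the set of vertices and edges touched by $W$ forms a connected subgraph of $G$. As $W$ covers all of $E(G)$, and assuming (as is standard, and consistent with the rest of the paper) that $G$ has no isolated vertices or that we restrict attention to the subgraph induced by the edges, we conclude that $G$ is connected.

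For the converse, suppose $G$ is connected. Because every edge is available at every timestamp, the temporal constraint is vacuous, so it suffices to exhibit any static walk in $G$ that traverses each edge at least once; such a walk is automatically a valid temporal \walk. Since $G$ is connected, I would argue directly that such a covering walk exists: starting from any vertex, one can perform a traversal (for instance a depth-first exploration) in which, whenever a new edge must be reached, one walks along already-visited edges of the connected graph to get to an endpoint of that new edge and then traverses it. Because walks place no restriction on repeating edges or vertices, connectivity alone guarantees we can string together a single walk covering all edges. Concatenating the edge-traversal steps in this manner yields the desired \walk, completing the equivalence.

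The step I expect to require the most care is the converse, specifically verifying rigorously that connectivity suffices to build a single covering \emph{walk} rather than merely a connected edge set. The clean way to handle this is to invoke the standard fact that every connected graph has a closed spanning walk covering all edges (one can, e.g., duplicate each edge to obtain an Eulerian multigraph and take its Eulerian tour, or simply note that a DFS traversal visiting each edge and backtracking produces such a walk). Once that static fact is in hand, the temporal aspect is trivial in the dynamic-based setting, and the overall argument reduces to the remark that the problem is equivalent to testing connectivity, hence solvable in linear time.
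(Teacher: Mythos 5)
Your proof is correct and follows essentially the same route as the paper, which observes that since all edges are always available, the temporal constraint is vacuous and an \walk exists if and only if $G$ is connected (this is precisely the characterization noted in Table~\ref{tab:results}). Your additional care about isolated vertices and the explicit construction of a covering walk (via edge-doubling or DFS backtracking) only makes rigorous what the paper treats as immediate.
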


By Lemma~\ref{lem:easy}, Theorem~\ref{thm:walk}, and Lemma~\ref{lem:pwalkdynamic}, we obtain Item~\ref{item:one} of Theorem~\ref{thm:main}.
Finally, note that if one is interested in closed walks instead, not only our $\NP$-completeness reduction can be adapted in order to ensure that we can always go back to the initial vertex, but also the complexity results still hold.

\section{Eulerian Local Tours and Trails}\label{sec:localtrail}

In this section we focus on Item~\ref{item:two} of Theorem~\ref{thm:main}. In the whole section, we will focus on dynamic-based temporal graphs as the hardness results for general temporal graphs are implied by the ones we prove for this restricted class. After the preliminary result in Lemma~\ref{lem:oddVertices}, we focus on proving the hardness result for the problem of deciding whether $(G,[2])$ has an \localtour, explaining the construction behind our reduction from NAE 3-\SAT, whose correctness is proved in Theorem~\ref{thm:localtour}. We also argue that, if $G$ is a cubic graph, then being Hamiltonian is a necessary but not sufficient condition for $(G,[2])$ to admit an \localtour, arguing the need of an \emph{ad hoc} reduction for our problem.
As the reduction in Theorem~\ref{thm:localtour} focuses on solving \plocaltour for $\tau=2$, in Corollary~\ref{cor:local} we extend this result to each fixed $\tau$ and to trails, thus completing the proof of Item~\ref{item:two} of Theorem~\ref{thm:main}.

%We start by making some observations that will help our proof. 
The following Lemma, whose proof is given in Appendix~\ref{app:oddVertices}, will help our proof.

\begin{lemma}\label{lem:oddVertices}
Let $G$ be a graph. If $(G,[2])$ has an \localtour $T$, then $T$ restricted to timestamp $i$ must pass by all vertices of odd degree in $G$, for each $i\in [2]$.
\end{lemma}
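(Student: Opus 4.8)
The plan is to argue by a parity/degree-counting argument, analysing how an Eulerian local tour behaves at each vertex separately in each snapshot.

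First I would set up the local structure. Let $T$ be an Eulerian local tour of $(G,[2])$, and fix a timestamp $i\in[2]$. Since $(G,[2])$ is dynamic-based, every edge is available in both snapshots, so $G_1=G_2=G$. The key observation is that $T$ is a temporal walk whose edge-times are nondecreasing; hence $T$ naturally decomposes into a block $T_1$ using only timestamp $1$ followed by a block $T_2$ using only timestamp $2$, and the transition happens at a single vertex $z$ where $T_1$ ends and $T_2$ begins. Because $T$ is a \emph{local} tour, within each block no edge is repeated, so each $T_i$ is an ordinary (static) trail in $G$. Moreover, since $T$ is a \emph{closed} tour, its first and last vertex coincide; together with the single transition point $z$, this means $T_1$ is a $v_0,z$-trail and $T_2$ is a $z,v_0$-trail in $G$.

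Next I would do the degree count at an arbitrary odd-degree vertex $u$. Let $d_i(u)$ be the number of edges of $G$ incident to $u$ that are traversed by $T_i$ (each such edge used exactly once within the block, by the local-trail property). Since every edge of $G$ must be covered by $T=T_1\cup T_2$, each of the $\deg_G(u)$ edges at $u$ is traversed in at least one of the two blocks, so $d_1(u)+d_2(u)\ge \deg_G(u)$, with equality exactly when no edge at $u$ is used in both blocks. The crucial parity fact is that in a static trail, the number of edge-endpoints used at any vertex other than the two endpoints of the trail is even; so unless $u$ is an endpoint of the trail $T_i$ (i.e. $u\in\{v_0,z\}$), the quantity $d_i(u)$ is even. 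I would then suppose, for contradiction, that $T$ (say, $T_i$) misses some odd-degree vertex $u$, meaning $d_i(u)=0$. Then all $\deg_G(u)$ edges at $u$ must be covered by the other block, forcing $d_{3-i}(u)\ge \deg_G(u)$, which is odd. But an odd value of $d_{3-i}(u)$ can only occur if $u$ is an endpoint of block $T_{3-i}$, i.e. $u\in\{v_0,z\}$.

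The remaining work, and the step I expect to require the most care, is ruling out the boundary cases where $u$ could be a legitimate endpoint. Here I would exploit the closedness of the full tour: because $T$ is closed, $v_0$ is both the start of $T_1$ and the end of $T_2$, so across the two blocks the endpoint contributions at $v_0$ cancel in parity; the only genuinely ``odd'' vertex is the transition vertex $z$, and it is odd in exactly one block while being the shared endpoint that glues $T_1$ to $T_2$. I would therefore track the total parity of usage $d_1(u)+d_2(u)$ at $u$ over both timestamps and show that for a vertex missed by one snapshot, the required coverage of all its (odd many) edges in the single remaining snapshot produces an odd endpoint-count there, which the gluing constraints at $v_0$ and $z$ cannot simultaneously satisfy for an odd-degree vertex that is skipped entirely in the other timestamp. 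Pushing this parity bookkeeping through each of the two symmetric timestamps $i\in[2]$ yields the contradiction, establishing that every odd-degree vertex of $G$ must be visited by $T$ in each timestamp.
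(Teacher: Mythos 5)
Your setup coincides with the paper's: you decompose $T$ into static trails $T_1$ (timestamp~1) and $T_2$ (timestamp~2), observe that $T_1$ is a $v_0,z$-trail and $T_2$ a $z,v_0$-trail, and use the standard parity fact that in a static trail every non-endpoint vertex has even degree in the trail's edge set. From this you correctly deduce that an odd-degree vertex $u$ missed by $T_i$ would have all $\deg_G(u)$ of its edges (an odd number, each used exactly once by the local-trail property) inside $T_{3-i}$, forcing $u$ to be an endpoint of $T_{3-i}$, i.e.\ $u\in\{v_0,z\}$. Up to here the argument is sound and is exactly the paper's.

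The gap is in how you dispose of this last case, which you yourself flag as the step needing the most care. The resolution is a one-line observation that you never state: the endpoints of $T_{3-i}$ are $v_0$ and $z$, and these are \emph{also} the endpoints of $T_i$, hence both are visited by $T_i$; a vertex $u$ missed entirely by $T_i$ therefore satisfies $u\notin\{v_0,z\}$, so $u$ is not an endpoint of $T_{3-i}$, its degree in $T_{3-i}$ is even, and this contradicts $d_{3-i}(u)=\deg_G(u)$ being odd. (If $T_i$ happens to be empty, then $T_{3-i}$ is a closed trail, in which every vertex has even degree, and the same contradiction follows.) What you propose instead---tracking the total $d_1(u)+d_2(u)$ and appealing to ``gluing constraints at $v_0$ and $z$''---is not an argument: an odd total usage at a vertex is perfectly consistent with that vertex being a trail endpoint, so no contradiction can be extracted from totals alone. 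Moreover, your supporting claim is false: when $v_0\neq z$, \emph{both} $v_0$ and $z$ have odd degree in \emph{both} blocks, since each is an endpoint of both $T_1$ and $T_2$; it is not the case that ``the only genuinely odd vertex is $z$, odd in exactly one block.'' So the decisive step of your proof is both unproven and aimed in the wrong direction; replacing it with the shared-endpoint observation above completes the proof and is precisely what the paper does.
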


A simple consequence of the above lemma is that, as previously said, if $G$ is cubic, then $G$ must Hamiltonian in order for $(G,[2])$ to have an \localtour. Since deciding whether a cubic graph is Hamiltonian is $\NP$-complete~\cite{garey1976planar}, this hints towards the $\NP$-completeness of the problem. However, since the other way around is not necessarily true (see e.g. the graph in Figure~\ref{fig:outerplanar}), we need an explicit reduction. Indeed, the construction in Figure~\ref{fig:outerplanar} shows us that we might need an arbitrarily large lifetime in order to be able to visit all the edges of $(G,[\tau])$ even if $G$ is a 2-connected outerplanar cubic graph (which is trivially hamiltonian). 

\begin{figure}[ht]
\begin{center}
  \begin{tikzpicture}[scale=0.5]
  \pgfsetlinewidth{1pt}
  \pgfdeclarelayer{bg}    % declare background layer
   \pgfsetlayers{bg,main}  % set the order of the layers (main is the standard layer)

  \tikzset{vertex/.style={circle, minimum size=0.1cm, draw, inner sep=1pt, fill=black!50}}

    %central hexagon
    \node[vertex] (a) at (0:1) {};
    \node[vertex] (b) at (60:1) {};
    \node[vertex] (c) at (120:1) {};
    \node[vertex] (d) at (180:1) {};
    \node[vertex] (e) at (240:1) {};
    \node[vertex] (f) at (300:1) {};

    \begin{pgfonlayer}{bg}    % select the background layer
        \draw (a)--(b)--(c)--(d)--(e)--(f)--(a);
    \end{pgfonlayer}
    
    %hexagon with center (0.75,0.433)
    \node[vertex,xshift=0.75cm,yshift=0.433cm] (a) at (0:1) {};
    \node[vertex,xshift=0.75cm,yshift=0.433cm] (b) at (60:1) {};
    \node[vertex,xshift=0.75cm,yshift=0.433cm] (c) at (120:1) {};
    \node[vertex,xshift=0.75cm,yshift=0.433cm] (d) at (180:1) {};
    \node[vertex,xshift=0.75cm,yshift=0.433cm] (e) at (240:1) {};
    \node[vertex,xshift=0.75cm,yshift=0.433cm] (f) at (300:1) {};

    \begin{pgfonlayer}{bg}    % select the background layer
        \draw (a)--(b)--(c)--(d)--(e)--(f)--(a);
    \end{pgfonlayer}

    %hexagon with center (0.75,1.3)
    \node[vertex,xshift=0.75cm,yshift=1.3cm] (a) at (0:1) {};
    \node[vertex,xshift=0.75cm,yshift=1.3cm] (b) at (60:1) {};
    \node[vertex,xshift=0.75cm,yshift=1.3cm] (c) at (120:1) {};
    \node[vertex,xshift=0.75cm,yshift=1.3cm] (d) at (180:1) {};
    \node[vertex,xshift=0.75cm,yshift=1.3cm] (e) at (240:1) {};
    \node[vertex,xshift=0.75cm,yshift=1.3cm] (f) at (300:1) {};

    \begin{pgfonlayer}{bg}    % select the background layer
        \draw (a)--(b)--(c)--(d)--(e)--(f)--(a);
    \end{pgfonlayer}

    %hexagon with center (3,0)
    \node[vertex,xshift=1.5cm] (a) at (0:1) {};
    \node[vertex,xshift=1.5cm] (b) at (60:1) {};
    \node[vertex,xshift=1.5cm] (c) at (120:1) {};
    \node[vertex,xshift=1.5cm] (d) at (180:1) {};
    \node[vertex,xshift=1.5cm] (e) at (240:1) {};
    \node[vertex,xshift=1.5cm] (f) at (300:1) {};

    \begin{pgfonlayer}{bg}    % select the background layer
        \draw (a)--(b)--(c)--(d)--(e)--(f)--(a);
    \end{pgfonlayer}
    
    %hexagon with center (0,-0.866)
    \node[vertex,yshift=-0.866cm] (a) at (0:1) {};
    \node[vertex,yshift=-0.866cm] (b) at (60:1) {};
    \node[vertex,yshift=-0.866cm] (c) at (120:1) {};
    \node[vertex,yshift=-0.866cm] (d) at (180:1) {};
    \node[vertex,yshift=-0.866cm] (e) at (240:1) {};
    \node[vertex,yshift=-0.866cm] (f) at (300:1) {};

    \begin{pgfonlayer}{bg}    % select the background layer
        \draw (a)--(b)--(c)--(d)--(e)--(f)--(a);
    \end{pgfonlayer}

    %hexagon with center (0.75,-1.3)
    \node[vertex,xshift=0.75cm,yshift=-1.3cm] (a) at (0:1) {};
    \node[vertex,xshift=0.75cm,yshift=-1.3cm] (b) at (60:1) {};
    \node[vertex,xshift=0.75cm,yshift=-1.3cm] (c) at (120:1) {};
    \node[vertex,xshift=0.75cm,yshift=-1.3cm] (d) at (180:1) {};
    \node[vertex,xshift=0.75cm,yshift=-1.3cm] (e) at (240:1) {};
    \node[vertex,xshift=0.75cm,yshift=-1.3cm] (f) at (300:1) {};

    \begin{pgfonlayer}{bg}    % select the background layer
        \draw (a)--(b)--(c)--(d)--(e)--(f)--(a);
    \end{pgfonlayer}

    %hexagon with center (-0.75,-1.3)
    \node[vertex,xshift=-0.75cm,yshift=-1.3cm] (a) at (0:1) {};
    \node[vertex,xshift=-0.75cm,yshift=-1.3cm] (b) at (60:1) {};
    \node[vertex,xshift=-0.75cm,yshift=-1.3cm] (c) at (120:1) {};
    \node[vertex,xshift=-0.75cm,yshift=-1.3cm] (d) at (180:1) {};
    \node[vertex,xshift=-0.75cm,yshift=-1.3cm] (e) at (240:1) {};
    \node[vertex,xshift=-0.75cm,yshift=-1.3cm] (f) at (300:1) {};

    \begin{pgfonlayer}{bg}    % select the background layer
        \draw (a)--(b)--(c)--(d)--(e)--(f)--(a);
    \end{pgfonlayer}

    %hexagon with center (-0.75,0.433)
    \node[vertex,xshift=-0.75cm,yshift=0.433cm] (a) at (0:1) {};
    \node[vertex,xshift=-0.75cm,yshift=0.433cm] (b) at (60:1) {};
    \node[vertex,xshift=-0.75cm,yshift=0.433cm] (c) at (120:1) {};
    \node[vertex,xshift=-0.75cm,yshift=0.433cm] (d) at (180:1) {};
    \node[vertex,xshift=-0.75cm,yshift=0.433cm] (e) at (240:1) {};
    \node[vertex,xshift=-0.75cm,yshift=0.433cm] (f) at (300:1) {};

    \begin{pgfonlayer}{bg}    % select the background layer
        \draw (a)--(b)--(c)--(d)--(e)--(f)--(a);
    \end{pgfonlayer}
    
    %hexagon with center (-0.75,1.3)
    \node[vertex,xshift=-0.75cm,yshift=1.3cm] (a) at (0:1) {};
    \node[vertex,xshift=-0.75cm,yshift=1.3cm] (b) at (60:1) {};
    \node[vertex,xshift=-0.75cm,yshift=1.3cm] (c) at (120:1) {};
    \node[vertex,xshift=-0.75cm,yshift=1.3cm] (d) at (180:1) {};
    \node[vertex,xshift=-0.75cm,yshift=1.3cm] (e) at (240:1) {};
    \node[vertex,xshift=-0.75cm,yshift=1.3cm] (f) at (300:1) {};

    \begin{pgfonlayer}{bg}    % select the background layer
        \draw (a)--(b)--(c)--(d)--(e)--(f)--(a);
    \end{pgfonlayer}

    %hexagon with center (-3,0)
    \node[vertex,xshift=-1.5cm] (a) at (0:1) {};
    \node[vertex,xshift=-1.5cm] (b) at (60:1) {};
    \node[vertex,xshift=-1.5cm] (c) at (120:1) {};
    \node[vertex,xshift=-1.5cm] (d) at (180:1) {};
    \node[vertex,xshift=-1.5cm] (e) at (240:1) {};
    \node[vertex,xshift=-1.5cm] (f) at (300:1) {};
    
    \begin{pgfonlayer}{bg}    % select the background layer
        \draw (a)--(b)--(c)--(d)--(e)--(f)--(a);
    \end{pgfonlayer}

\end{tikzpicture}

\end{center}
\caption{Example of outerplanar graph $G$ such that $(G,[2])$ does not have an \localtour.}  
\label{fig:outerplanar}
\end{figure}
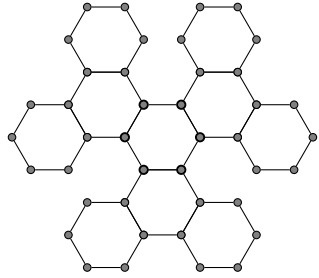

In the following we explain the construction behind our reduction from NAE 3-\SAT. Let $\phi$ be a CNF formula on variables $\{x_1,\cdots,x_n\}$ and clauses $\{c_1,\cdots,c_m\}$. 
We start by presenting a meta-construction, in the sense that part of the constructed graph will be presented for now as black boxes and the actual construction is done later, as depicted in Figure~\ref{fig:DynEdge}. The meta part concerns the clauses; so for now, denote by $C_i$ the black box related to clause $c_i$. Without going into details, $C_i$ will contain exactly one entry vertex for each of its literal. So, given a literal $\ell$ contained in $c_i$, denote by $I_i(\ell)$ the entry vertex for $\ell$ in $C_i$. All defined three vertices are distinct.

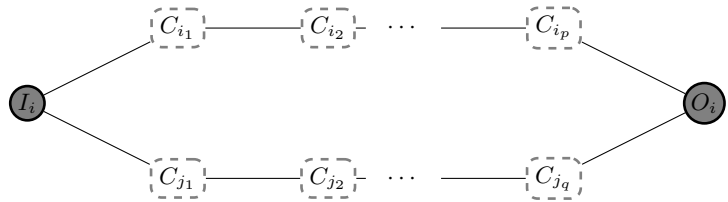
\begin{figure}[h]
\begin{center}
  \begin{tikzpicture}[scale=1]
  \pgfsetlinewidth{1pt}
  \pgfdeclarelayer{bg}    % declare background layer
   \pgfsetlayers{bg,main}  % set the order of the layers (main is the standard layer)

  \tikzset{vertex/.style={circle, minimum size=0.1cm, draw, inner sep=1pt, fill=black!50}}
  \tikzset{blackbox/.style ={draw=black!50, rounded corners, dashed}}%, minimum height=8mm, minimum width=2.3cm} }
    
    \node[blackbox] (Ci1) at (2,1) {$C_{i_1}$};
    \node[blackbox] (Ci2) at (4,1) {$C_{i_2}$};
    \node at (5,1) {$\cdots$};
    \node[blackbox] (Cip) at (7,1) {$C_{i_p}$};
    \node[blackbox] (Cj1) at (2,-1) {$C_{j_1}$};
    \node[blackbox] (Cj2) at (4,-1) {$C_{j_2}$};
    \node at (5,-1) {$\cdots$};
    \node[blackbox] (Cjq) at (7,-1) {$C_{j_q}$};
    \node[vertex] (I) at (0,0) {$I_i$};
    \node[vertex] (O) at (9,0) {$O_i$};
    
    \begin{pgfonlayer}{bg}    % select the background layer
        \draw (I)--(Ci1)--(Ci2)--(4.5,1) (5.5,1)--(Cip)--(O);
        \draw (I)--(Cj1)--(Cj2)--(4.5,-1) (5.5,-1)--(Cjq)--(O);
    \end{pgfonlayer}
    
  \end{tikzpicture}

\end{center}
\caption{Edge gadget with clause black boxes.}\label{fig:DynEdge}
\end{figure}

Now, for each variable $x_i$, let $c_{i_1},\cdots,c_{i_p}$ be the clauses containing $x_i$ positively and $c_{j_1},\cdots,c_{j_q}$ containing $x_i$ negatively. Add two new vertices, $I_i$ and $O_i$ (these will be the entry and exit vertices for the variable gadget), and add the following edges (these compose the paths shown in Figure~\ref{fig:DynEdge}):
\[\begin{array}{ll}
E_i = & \{I_iI_{i_1}(x_i),I_iI_{j_1}(\overline{x}_i),I_{i_p}(x_i)O_i,I_{j_q}(\overline{x}_i)O_i\} \\ & \cup \{I_{i_h}(x_i)I_{i_{h+1}}(x_i)\mid h\in [p-1]\} \\ & \cup \{I_{j_h}(\overline{x}_i)I_{j_{h+1}}(\overline{x}_i)\mid h\in [q-1]\}\end{array}\]

The paths will function as a switch, telling us whether the variable is true or false within the considered snapshot; we then denote by $P_i$ the set of edges in the path $(I_i,I_{i_1}(x_i),\cdots,I_{i_p}(x_i),O_i)$, and by $\overline{P}_i$ the set of edges in the path $(I_i,I_{j_1}(\overline{x}_i),\cdots,I_{j_q}(\overline{x}_i),O_i)$. 
Now, to link the variable gadgets and to construct the clause gadgets, we will need a gadget that will function as an edge that must appear in the trail performed in $G_1$ and the one performed in $G_2$. For this, we use Lemma~\ref{lem:oddVertices} applied to the gadget in Figure~\ref{fig:forcedEdge}; when adding such a gadget between a pair $u,v$, we simply say that we are adding the \emph{forced edge $uv$}.

Now, to link the variable gadgets, we add three new vertices $s_1,s_2,t$ and the following forced edges. 

\[E' = \{s_it\mid i\in [2]\}\cup \{tI_1,O_nt\}\cup \{O_iI_{i+1}\mid i\in [n]\}.\]

The new vertices simply help us assume where the trail starts and finishes. Now, let $T$ be an \localtour of $(G,[2])$ and denote by $T_i$ the trail in $G$ defined by $T$ restricted to $G_i$, for $i\in [2]$. It is fairly easy to see (and we will prove it shortly) that if we can ensure that $T_1$ uses $P_i$ if and only if $T_2$ uses $\overline{P}_i$, then we can prove equivalence with NAE 3-\SAT. In other words, the clause gadget must be so that, for every clause $c_j$ containing $x_i$ (or equivalently $\overline{x}_i$), we get that either both edges incident to $I_j(x_i)$ in $P_i$ (or equivalently $I_j(\overline{x}_i)$ in $\overline{P}_i$) are used, or none of them is used. Such a gadget is presented in Figure~\ref{fig:DynClauseComb}, where the red edges are forced.

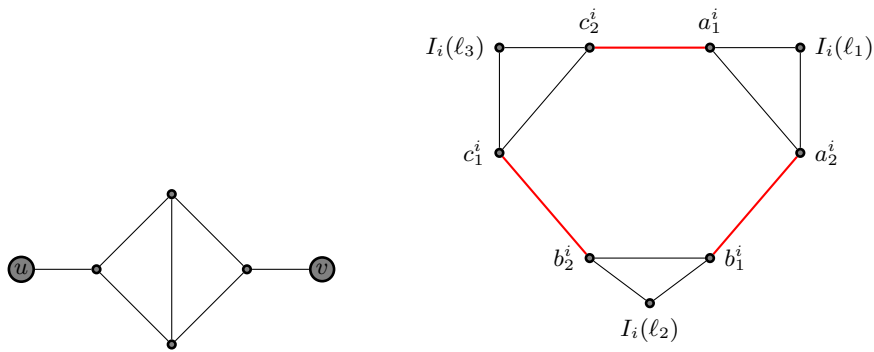
\begin{figure}[h]
     \centering
     \begin{subfigure}[b]{0.3\textwidth}
         \centering
     \begin{center}
  \begin{tikzpicture}[scale=1]
  \pgfsetlinewidth{1pt}
  \pgfdeclarelayer{bg}    % declare background layer
   \pgfsetlayers{bg,main}  % set the order of the layers (main is the standard layer)

  \tikzset{vertex/.style={circle, minimum size=0.1cm, draw, inner sep=1pt, fill=black!50}}

    \node[vertex] (a1) at (0,0) {$u$};
    \node[vertex] (a2) at (1,0) {};
    \node[vertex] (a3) at (2,1) {};
    \node[vertex] (a4) at (2,-1) {};
    \node[vertex] (a5) at (3,0) {};
    \node[vertex] (a6) at (4,0) {$v$};

    \begin{pgfonlayer}{bg}    % select the background layer
        \draw (a1)--(a2)--(a3)--(a4)--(a5)--(a6) (a2)--(a4) (a3)--(a5);
    \end{pgfonlayer}
    
  \end{tikzpicture}

\end{center}
\caption{Gadget related to a forced edge $uv$.}\label{fig:forcedEdge}
     \end{subfigure}
     \hfill
     \begin{subfigure}[b]{0.6\textwidth}
         \centering
     \begin{center}
  \begin{tikzpicture}[scale=1]
  \pgfsetlinewidth{1pt}
  \pgfdeclarelayer{bg}    % declare background layer
   \pgfsetlayers{bg,main}  % set the order of the layers (main is the standard layer)

  \tikzset{vertex/.style={circle, minimum size=0.1cm, draw, inner sep=1pt, fill=black!50}}

    \node[vertex, label=90:$c^i_{2}$] (c2) at (-0.8,1.4) {};
    \node[vertex, label=90:$a^i_1$] (a1) at (0.8,1.4) {};
    \node[vertex, label=0:$a^i_{2}$] (a2) at (2,0) {};
    \node[vertex, label=0:$b^i_1$] (b1) at (0.8,-1.4) {};
    \node[vertex, label=180:$b^i_{2}$] (b2) at (-0.8,-1.4) {};
    \node[vertex, label=180:$c^i_1$] (c1) at (-2,0) {};
    
    \node[vertex, label=0:$I_i(\ell_1)$] (l1) at (2,1.4) {};
    \node[vertex, label=-90:$I_i(\ell_2)$] (l2) at (0,-2) {};
    \node[vertex, label=180:$I_i(\ell_3)$] (l3) at (-2,1.4) {};

    \begin{pgfonlayer}{bg}    % select the background layer
        \draw[red,thick] (a1)--(c2) (a2)--(b1) (c1)--(b2);
        \draw (a1)--(a2) (b1)--(b2) (c1)--(c2) (a1)--(l1)--(a2) (b1)--(l2)--(b2) (c1)--(l3)--(c2);
    \end{pgfonlayer}
    
  \end{tikzpicture}

\end{center}
    \caption{Gadget related to clause $c_i$. Red edges represent forced edges.}
    \label{fig:DynClauseComb}
    \end{subfigure}
    \caption{Gadgets for the reduction in Theorem~\ref{thm:localtour}.}
\end{figure}

\begin{theorem}\label{thm:localtour}
Let $G$ be a graph, with degree is at most 4. Then \plocaltour is $\NP$-complete on $(G,[2])$.
\end{theorem}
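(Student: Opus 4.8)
The plan is to prove membership in $\NP$ and then reduce from NAE 3-\SAT. Membership is easy: a \localtour of a dynamic-based graph with $\tau=2$ can be written as a concatenation $T_1\cdot T_2$ of two trails of $G$, one per timestamp, each of length at most $|E(G)|$; hence a candidate solution has polynomial size, and checking that $T_1\cdot T_2$ is a closed walk, that each $T_i$ is a trail, and that $E(G)=E(T_1)\cup E(T_2)$ is done in polynomial time. For the hardness reduction I would use the gadgets already described, and the whole correctness argument amounts to controlling how the two trails $T_1,T_2$ of a hypothetical solution traverse the construction.

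I would first pin down the global shape of any \localtour $T=T_1\cdot T_2$. Since $s_1,s_2$ have degree $1$, Lemma~\ref{lem:oddVertices} forces both trails to visit them, so each $T_i$ is an $s_1,s_2$-trail and $T$ is the closed walk $s_1\to s_2\to s_1$; moreover each forced edge (Figure~\ref{fig:forcedEdge}) is traversed by both $T_1$ and $T_2$, again by Lemma~\ref{lem:oddVertices}. Following the forced edges $tI_1$, $O_iI_{i+1}$, $O_nt$ then shows that each $T_i$ runs along the backbone $s_1\to t\to I_1\to\cdots\to O_n\to t\to s_2$, entering each variable gadget at $I_i$ and leaving at $O_i$. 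As $I_i$ has odd degree $3$ (one forced edge plus the first edges of $P_i$ and $\overline{P}_i$), Lemma~\ref{lem:oddVertices} and the trail condition force each $T_i$ to use exactly one of the two path-starts at $I_i$ (and symmetrically one path-end at $O_i$); since both paths must be covered, I obtain the \emph{switch property}: $T_1$ selects $P_i$ exactly when $T_2$ selects $\overline{P}_i$. This lets me define a truth assignment by setting $x_i$ true iff $T_1$ selects $P_i$.

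The technical heart is a \emph{no-hop lemma}: if $T_k$ does not select a path $Q\in\{P_i,\overline{P}_i\}$, then $T_k$ uses \emph{no} edge of $Q$. I would derive this from two local facts about the clause gadget of Figure~\ref{fig:DynClauseComb}. First, every hexagon vertex has degree $3$ with one incident (forced) red edge, so each trail uses that red edge plus exactly one of the regular edge and the ear toward the literal vertex; a short parity check shows that the two vertices of a literal pair make the \emph{same} choice, so at each literal vertex a trail uses either both of its ears or neither (``ears as a unit''), and consequently each literal pair is covered by exactly one of $T_1,T_2$ doing the ear choice. Second, combining ``ears as a unit'' with the switch property: if $T_k$ used a nonempty proper set of edges of $Q$, a maximal used subpath of $Q$ would have an interior boundary vertex incident to exactly one used path edge, forcing an odd number of used ears there and contradicting ``ears as a unit'' (the endpoints $I_i,O_i$ are already blocked by the switch property). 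Hence $T_k$ uses no edge of the unselected path.

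With these tools the equivalence with NAE 3-\SAT follows. Each hexagon is attached to the rest of $G$ only through the ears at its literal vertices, and its three red edges must be covered by \emph{both} $T_1$ and $T_2$, so each trail must enter every clause hexagon; the first time a trail enters a given hexagon it arrives at some literal vertex along a path edge (an ear cannot be used before entering), so by the no-hop lemma that literal is selected by that trail. Thus $T_1$ entering forces a true literal and $T_2$ entering forces a false literal in every clause, i.e.\ the assignment is an NAE-assignment. Conversely, from an NAE-assignment I build $T_1,T_2$ along complementary backbones, routing each trail into every hexagon through the ears of a literal it satisfies (one exists by NAE) and covering exactly the pairs left over by the other trail, which yields a valid \localtour. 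A direct inspection of every gadget gives $\Delta(G)\le 4$ and the reduction is clearly polynomial, so \plocaltour is $\NP$-complete on $(G,[2])$; the extension to tours and to every fixed $\tau$ (Corollary~\ref{cor:local}) is then routine. The step I expect to be the main obstacle is precisely the no-hop lemma together with the ``first-entry'' argument: one must rule out that a trail sneaks into a clause hexagon (and thereby covers its forced edges) without genuinely selecting one of the hexagon's literals, since that is exactly what would destroy the correspondence between ear choices and truth values.
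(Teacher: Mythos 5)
Your proposal is correct and follows essentially the same route as the paper's proof: the same NAE 3-\SAT reduction with the same gadgets, the use of Lemma~\ref{lem:oddVertices} to fix the $s_1,s_2$-trail structure and the forced edges, a parity argument on the clause hexagons giving your ``ears as a unit'' property (the paper's claim (II)), propagation along the variable paths (your no-hop lemma), and the separator/first-entry argument showing each trail must select a literal in every clause. The only cosmetic difference is that the paper proves (II) by a direct degree-one contradiction at a hexagon vertex rather than via your two-step parity derivation, but these are the same argument in substance.
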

\begin{proof}
Let $\phi$ and $G$ be as previously stated. First, consider a truth NAE assignment $f$ to $\phi$. We construct $T_1,T_2\subseteq E(G)$ and prove that they form an \localtour of $G$. Start by putting $P_i$ in $T_1$ and $\overline{P}_i$ in $T_2$ if $x_i$ is true, and the other way around if $x_i$ is false. From now on, whenever we add a forced edge to $T_1$ and $T_2$, we are actually adding the trails depicted in Figure~\ref{fig:forcedEdgesTrails}. 

\begin{figure}[h]
     \centering
     \begin{subfigure}[b]{0.4\textwidth}
         \centering
         \begin{center}
  \begin{tikzpicture}[scale=1]
  \pgfsetlinewidth{1pt}
  \pgfdeclarelayer{bg}    % declare background layer
   \pgfsetlayers{bg,main}  % set the order of the layers (main is the standard layer)

  \tikzset{vertex/.style={circle, minimum size=0.1cm, draw, inner sep=1pt, fill=black!50}}

    \node[vertex] (a1) at (0,0) {$u$};
    \node[vertex] (a2) at (1,0) {};
    \node[vertex] (a3) at (2,1) {};
    \node[vertex] (a4) at (2,-1) {};
    \node[vertex] (a5) at (3,0) {};
    \node[vertex] (a6) at (4,0) {$v$};

    \begin{pgfonlayer}{bg}    % select the background layer
        \draw[blue,thick] (a1)--(a2)--(a3)--(a4)--(a5)--(a6);
        \draw (a2)--(a4) (a3)--(a5);
    \end{pgfonlayer}
    
  \end{tikzpicture}

\end{center}
         \caption{Trail added to $T_1$ when forced edge $uv$ is added to $T_1$.}
         %\label{fig:Trail1}
     \end{subfigure}
     \hfill
     \begin{subfigure}[b]{0.4\textwidth}
         \centering
         \begin{center}
  \begin{tikzpicture}[scale=1]
  \pgfsetlinewidth{1pt}
  \pgfdeclarelayer{bg}    % declare background layer
   \pgfsetlayers{bg,main}  % set the order of the layers (main is the standard layer)

  \tikzset{vertex/.style={circle, minimum size=0.1cm, draw, inner sep=1pt, fill=black!50}}

    \node[vertex] (a1) at (0,0) {$u$};
    \node[vertex] (a2) at (1,0) {};
    \node[vertex] (a3) at (2,1) {};
    \node[vertex] (a4) at (2,-1) {};
    \node[vertex] (a5) at (3,0) {};
    \node[vertex] (a6) at (4,0) {$v$};

    \begin{pgfonlayer}{bg}    % select the background layer
        \draw[magenta,thick] (a1)--(a2)--(a4)--(a3)--(a5)--(a6);
        \draw (a2)--(a3) (a4)--(a5);
    \end{pgfonlayer}
    
  \end{tikzpicture}

\end{center}
         \caption{Trail added to $T_2$ when forced edge $uv$ is added to $T_2$.}
         %\label{fig:Trail2}
     \end{subfigure}
     \caption{Trails related to forced edges.}
     \label{fig:forcedEdgesTrails}
\end{figure}
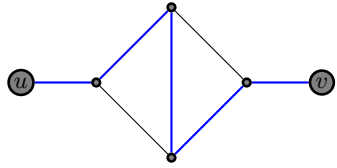
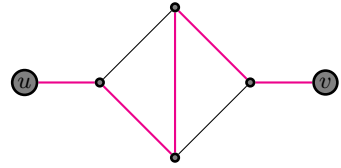

Now, add $E'$ to both $T_1$ and $T_2$, and consider $c_i$ with literals $\ell_1,\ell_2,\ell_3$. Suppose, without loss of generality, that $\ell_1$ is true and $\ell_2$ is false. We then add to $T_1$ the trail depicted in Figure~\ref{fig:l1true}, and to $T_2$ the one depicted in Figure~\ref{fig:l2true}. Observe that all internal edges of $C_i$ are covered; also note that the edges incident to $I_i(\ell_3)$ that come from the variable must be either in $T_1$ or in $T_2$ by construction. We know that the remaining edges are also covered by $T_1\cup T_2$ by construction. Finally, notice that both $T_1$ and $T_2$ touch all odd-degree vertices in a way that every vertex (including the even-degree ones) has even degree in $T_1$ and in $T_2$, except $s_1,s_2$ which have degree exactly~1. Also note that they form a connected graph; indeed they are formed by the cycle passing through the variable gadgets and $t$, together with some pending trails passing by the clause gadgets. Therefore, we can find an $s_1,s_2$-trail passing by all edges of $T_1$, and an $s_2,s_1$-trail passing by all edges of $T_2$, thus getting our \localtour.

\begin{figure}[h]
     \centering
     \begin{subfigure}[h]{0.4\textwidth}
         \centering
\begin{center}
  \begin{tikzpicture}[scale=1]
  \pgfsetlinewidth{1pt}
  \pgfdeclarelayer{bg}    % declare background layer
   \pgfsetlayers{bg,main}  % set the order of the layers (main is the standard layer)

  \tikzset{vertex/.style={circle, minimum size=0.1cm, draw, inner sep=1pt, fill=black!50}}

    \node[vertex, label=90:$c^i_{2}$] (c2) at (-0.8,1.4) {};
    \node[vertex, label=90:$a^i_1$] (a1) at (0.8,1.4) {};
    \node[vertex, label=0:$a^i_{2}$] (a2) at (2,0) {};
    \node[vertex, label=0:$b^i_1$] (b1) at (0.8,-1.4) {};
    \node[vertex, label=180:$b^i_{2}$] (b2) at (-0.8,-1.4) {};
    \node[vertex, label=180:$c^i_1$] (c1) at (-2,0) {};
    
    \node[vertex, label=90:$I_i(\ell_1)$] (l1) at (2,1.4) {};
    \node[vertex, label=0:$I_i(\ell_2)$] (l2) at (0,-2) {};
    \node[vertex, label=90:$I_i(\ell_3)$] (l3) at (-2,1.4) {};

    \begin{pgfonlayer}{bg}    % select the background layer
        \draw[blue,thick] (a1)--(c2) (a2)--(b1) (c1)--(b2)  (a1)--(l1)--(a2) (b1)--(b2) (3,1)--(l1)--(3,1.4) (c1)--(c2);
        \draw (a1)--(a2)  (b1)--(l2)--(b2) (-3,1)--(l3)--(-3,1.4) (-0.2,-3)--(l2)--(0.2,-3) (c1)--(l3)--(c2);
    \end{pgfonlayer}

  \end{tikzpicture}

\end{center}
         \caption{Added to $T_1$ when $\ell_1$ is true.}
         \label{fig:l1true}
     \end{subfigure}
     \hfill
     \begin{subfigure}[h]{0.4\textwidth}
         \centering
         
         \begin{center}
  \begin{tikzpicture}[scale=1]
  \pgfsetlinewidth{1pt}
  \pgfdeclarelayer{bg}    % declare background layer
   \pgfsetlayers{bg,main}  % set the order of the layers (main is the standard layer)

  \tikzset{vertex/.style={circle, minimum size=0.1cm, draw, inner sep=1pt, fill=black!50}}

        \node[vertex, label=90:$c^i_{2}$] (c2) at (-0.8,1.4) {};
    \node[vertex, label=90:$a^i_1$] (a1) at (0.8,1.4) {};
    \node[vertex, label=0:$a^i_{2}$] (a2) at (2,0) {};
    \node[vertex, label=0:$b^i_1$] (b1) at (0.8,-1.4) {};
    \node[vertex, label=180:$b^i_{2}$] (b2) at (-0.8,-1.4) {};
    \node[vertex, label=180:$c^i_1$] (c1) at (-2,0) {};
    
    \node[vertex, label=90:$I_i(\ell_1)$] (l1) at (2,1.4) {};
    \node[vertex, label=0:$I_i(\ell_2)$] (l2) at (0,-2) {};
    \node[vertex, label=90:$I_i(\ell_3)$] (l3) at (-2,1.4) {};

    \begin{pgfonlayer}{bg}    % select the background layer
        \draw[magenta,thick] (a1)--(c2) (a2)--(b1) (c1)--(b2)     (-0.2,-3)--(l2)--(0.2,-3) (b1)--(l2)--(b2) (c1)--(l3)--(c2) (a1)--(a2);
        \draw  (-3,1)--(l3)--(-3,1.4) (b1)--(b2) (3,1)--(l1)--(3,1.4) (c1)--(c2) (a1)--(l1)--(a2) ;
    \end{pgfonlayer}

  \end{tikzpicture}

         \end{center}
         \caption{Added to $T_2$ when $\ell_2$ is false.}
         \label{fig:l2true}
     \end{subfigure}
    \caption{Trails in $C_i$ related to a given NAE assignment.}  
    \label{fig:DynTrails}
\end{figure}
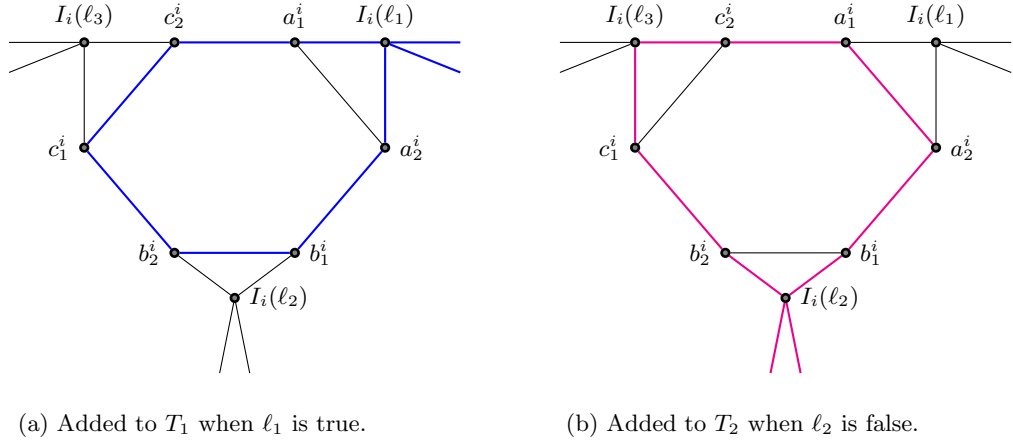

Now, let $T$ be an \localtour of $(G,[2])$, and for each $i\in [2]$, denote by $T_i$ the trail in $G$ defined by $T$ restricted to $G_i$. First observe that Lemma~\ref{lem:oddVertices} indeed ensures that $T_1$ and $T_2$ restricted to the gadget related to a forced edge $uv$ must be exactly as the trails depicted in Figure~\ref{fig:forcedEdgesTrails}; so in what follows we treat them exactly like edges that must appear in $T_1$ and $T_2$. Since there are 2 vertices of degree~1, namely $s_1$ and $s_2$, by Lemma~\ref{lem:oddVertices} we can suppose that $T_1$ starts in $s_1$ and finishes in $s_2$, while $T_2$ starts in $s_2$ and finishes in $s_1$. Therefore, each of $T_1$ and $T_2$ contains a tour in $G-\{s_1,s_2\}$, and hence: 
\begin{itemize}
    \item[(I)] Every vertex $u\in V(G)\setminus \{s_1,s_2\}$ has even degree in both $T_1$ and $T_2$.
\end{itemize}

Now, we prove that for each $x_i$, if $T_1$ intersects $P_i$, then $T_1$ does not intersect $\overline{P}_i$, the same holding for $T_2$. For this, consider $c_i$ with literals $\ell_1$, $\ell_2$, $\ell_3$, and for each $j\in [3]$, denote by $e^1_i(\ell_j),e^2_i(\ell_j)$ the edges incident to $I_i(\ell_j)$ not contained in $C_i$. We first prove that, for each $j\in [3]$ and $k\in [2]$:

\begin{itemize}
    \item[(II)] Edge $e^1_i(\ell_j)$ is used in $T_k$ if and only if edge $e^2_i(\ell_j)$ is used in $T_k$.  
\end{itemize}

Without loss of generality, assume $j=1$ and let $k\in [2]$; suppose by contradiction that $e^1_i(\ell_1)$ is in $T_k$, while $e^2_i(\ell_1)$ is not in $T_k$. By (I) we get that either $a_1^iI_i(\ell_1)$ or $a_2^iI_i(\ell_1)$ is in $T_k$, say $a_1^iI_i(\ell_1)$. But then we get that $a_1^ia_2^i$ and $a^i_2I_i(\ell_1)$ are not in $T_k$, as otherwise either $a_1^i$ or $I_i(\ell_1)$ would have odd degree in $T_k$. This is a contradiction since we then get $a^i_2$ with degree~1 in $T_k$. The same argument can be analogously applied when $a_2^iI_i(\ell_1)$ is in $T_k$ or when $j\in \{2,3\}$.

Consider now a variable $x_i$, and let $c_{i_1},\cdots,c_{i_p}$ be the clauses containing $x_i$ positively and $c_{j_1},\cdots,c_{j_q}$ containing $x_i$ negatively. Because $I_i$ has degree~3 in $G$ and $E'\setminus \{s_it\mid i\in [2]\}$ is contained in $T_1$, we get that exactly one between $I_1I_{i_1}(x_i)$ and $I_1I_{j_1}(\overline{x}_i)$ is contained in $T_1$. From (II), we then get that either $P_i$ is contained in $T_1$ or $\overline{P}_i$ is contained in $T_1$. Observe that this implies that $P_i$ is contained in $T_k$, while $\overline{P}_i$ is contained in $T_{3-k}$, for some $k\in[2]$. We then set $x_i$ to be true if and only if $T_1$ contains $P_i$. Because  the edges in $E_i = \{e^1_i(\ell_j),e^2_i(\ell_j)\mid j\in[3]\}$ separate $C_i$ from the rest of the graph and by (I), we get that both $T_1$ and $T_2$ must intersect $E_i$. Finally by (II) we get that this assignment is a NAE truth assignment for $\phi$.
\end{proof}

Observe that if we add two new vertices of degree one adjacent to vertex $t$, then we get a reduction to the problem of deciding whether the edges of $G$ can be covered by two trails, proving Corollary~\ref{cor:cover}.

\begin{corollary}
\plocaltour and \plocaltrail are $\NP$-complete on temporal graphs with lifetime $\tau$ for every fixed $\tau\ge 2$. This also holds on dynamic-based graphs.
\label{cor:local}
\end{corollary}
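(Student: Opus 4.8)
The plan is to derive all parts of the corollary --- open trails instead of closed tours, arbitrary fixed lifetime instead of $\tau=2$, and the dynamic-based restriction --- from the single reduction of Theorem~\ref{thm:localtour}, after first disposing of membership. Membership in $\NP$ is immediate: a \localtour (resp.\ \localtrail) traverses each edge at most once per snapshot, so it has length at most $\tau\cdot|E(G)|$; for fixed $\tau$ the walk itself is a polynomial certificate, and checking non-decreasing timestamps, absence of intra-snapshot repetition, full edge coverage and (for tours) closedness is clearly polynomial. So the whole content is hardness.

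For the passage from \plocaltour to \plocaltrail I would reuse the very graph $G$ built in Theorem~\ref{thm:localtour}. The key point is that $G$ has exactly two degree-one vertices, $s_1$ and $s_2$, and that the analysis pinning the solution down never used closedness beyond locating endpoints: since $s_1,s_2$ must be covered yet have degree one, in each snapshot they can occur only as an endpoint of that snapshot's trail, and an open-trail analogue of Lemma~\ref{lem:oddVertices} shows that at most the internal junction vertex can be absent from a given snapshot. Hence the two global endpoints of any \localtrail of $(G,[2])$ are forced to be $s_1$ and $s_2$, making the trail automatically closed; thus $(G,[2])$ admits a \localtrail iff it admits a \localtour, and the NAE 3-\SAT\ reduction transfers verbatim. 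The only thing I would verify carefully is the short case check that no endpoint can ``escape'' to the junction. (This also clarifies the contrast with Corollary~\ref{cor:cover}: inserting two further degree-one vertices at $t$ over-constrains the endpoints and so destroys both tours and trails, turning the instance into a two-trail \emph{cover} question.)

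For the lifetime extension I treat the two models separately. For general temporal graphs of fixed $\tau>2$ I would leave the $\tau=2$ instance on timestamps $1,2$ and glue, at the base vertex, a small connected appendix whose edges are active only at times $3,\dots,\tau$ and which admits a closed temporal walk over that window; since timestamps are non-decreasing, one performs the original tour first and then sweeps the appendix, so the lifetime becomes exactly $\tau$ while the answer is unchanged. The genuinely delicate case --- and the main obstacle --- is the dynamic-based setting with $\tau>2$: there every edge is present in every snapshot, so a solver may freely spread $E(G)$ over more than two trails, and Lemma~\ref{lem:oddVertices}, which forced \emph{both} snapshots to be active, is special to $\tau=2$. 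Reusing $G$ verbatim therefore fails, since an unsatisfiable formula might still be coverable with three or more trails. The right framework is to read the dynamic-based problem as asking whether $E(G)$ can be covered by at most $\tau$ trails forming a closed walk, and to engineer a construction whose minimum such trail number is exactly $\tau$ when $\phi$ is NAE-satisfiable and $\tau+1$ otherwise. Since Theorem~\ref{thm:localtour} already places this gap at $2$ versus $3$, I would augment its gadget with $\tau-2$ pairwise non-interfering attachments, each forcing one unavoidable extra trail (by introducing additional odd-degree vertices, in the spirit of the classical link between the number of odd-degree vertices and the number of trails needed to cover a graph), so that the baseline rises to $\tau$ while the one extra trail forced by unsatisfiability pushes the requirement to $\tau+1>\tau$. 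The crux, where all the care lies, is proving that these attachments raise the trail number by \emph{exactly} $\tau-2$, admit assembly into a single valid closed temporal walk when $\phi$ is satisfiable, and create no shortcut letting a non-NAE assignment be realised within $\tau$ trails. Since dynamic-based graphs are a special case of temporal graphs, hardness for the former yields hardness for the latter, completing Item~\ref{item:two} of Theorem~\ref{thm:main}.
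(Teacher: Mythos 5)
Your \NP-membership argument and your appendix construction for general (non-dynamic-based) temporal graphs with fixed $\tau>2$ are fine, but both load-bearing steps have genuine gaps. First, the passage from \plocaltour to \plocaltrail at $\tau=2$: your claim that the two endpoints of any \localtrail of $(G,[2])$ are forced to be $s_1$ and $s_2$ is false. A local trail of $(G,[2])$ consists of a trail $T_1$ in snapshot~1 from its start $a$ to a transition vertex $c$, followed by a trail $T_2$ in snapshot~2 from $c$ to its end $b$; the pendant vertex $s_1$ can perfectly well play the role of $c$ (the walk ends snapshot~1 at $s_1$ and leaves $s_1$ in snapshot~2, traversing the pendant edge once in each snapshot, which a local trail permits), leaving one of $a,b$ genuinely free to be any vertex of the graph. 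Moreover, even granting your claim, an $s_1,s_2$-trail is open, so ``automatically closed'' is a non sequitur. The free endpoint is not cosmetic: property (I) in the proof of Theorem~\ref{thm:localtour} (every vertex other than the designated endpoints has even degree in each $T_i$) now fails at an arbitrary vertex, which invalidates claim (II) at one clause gadget --- the contradiction there is precisely that $a^i_2$ would have degree~$1$ in $T_k$, which is no contradiction if $a^i_2$ is the free endpoint --- so a single clause can cheat and the NAE argument does not transfer verbatim. Second, for the dynamic-based case with $\tau>2$, which you correctly single out as the delicate case, you state only a plan and explicitly leave its crux (that the attachments raise the required number of trails by exactly $\tau-2$ and admit no shortcut) unproven; that crux is essentially the entire content of the corollary, and odd-degree counting alone will not deliver it, because it ignores the walk-continuity constraint that consecutive snapshot trails must share an endpoint.

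For comparison, the paper resolves both difficulties with one gadget. Given $(G,[2])$ and a guessed vertex $s$ (a Turing reduction over all $s\in V(G)$), it forms $G'$ by attaching a star with center $u$ and leaves $v_1,\dots,v_{\tau+1}$, identifying $v_2$ with $s$, and asks for an \localtrail of the dynamic-based graph $(G',[\tau])$. Within one snapshot each leaf is a dead end, so a snapshot trail can use at most two of the $\tau+1$ cut edges at $u$; and since a snapshot trail that ends at a leaf forces the next one to start there and re-cross the same cut edge, every snapshot after the first contributes at most one \emph{new} cut edge. Covering all $\tau+1$ of them within $\tau$ snapshots therefore forces the rigid pattern in which $T_i$ contains $v_iu$ and $uv_{i+1}$, which confines all edges of $G$ to the first two snapshots and forces the sub-walk inside $G$ to be a closed tour at $s$. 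This single counting argument simultaneously handles the open-versus-closed issue (the star absorbs the free endpoints, so no analogue of your endpoint-forcing claim is needed), the lifetime extension, and the dynamic-based restriction; it is exactly the missing construction your outline gestures at.
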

\begin{proof}
We first make a reduction from \plocaltour on $(G,[2])$ to \plocaltrail on $(G',[\tau])$. 
Given $(G,[2])$, let $G'$ be obtained from $G$ by adding a star on $\tau+1$ vertices and identifying one of its leaves with a vertex $s\in V(G)$. We argue that $(G,[2])$ has an \localtour starting and finishing in $s$ if and only if $(G',[\tau])$ has an \localtrail. The lemma follows because we can then obtain a Turing reduction by building a distinct instance for each $s\in V(G)$. Denote the vertices of the initial star by $u,v_1,\cdots,v_{\tau+1}$, where $u$ is the central vertex, and $v_2$ is the vertex where $G$ is pending. Let $T$ be an \localtour of $(G,[2])$ starting and finishing in $s$, and $T_1,T_2$ be the trails in $G$ defined by $T$.  Build an \localtrail of $(G',[\tau])$ by visiting $v_1u$, $v_2u$ and $T_1$ in $G'_1$, then performing $T_2$ and visiting $v_2u$ and $uv_3$ in $G'_2$, and finish to visit the remaining edges of the star in the obvious way.

Now, let $T$ be an \localtrail of $(G',[\tau])$, and denote by $T_i$ the trail in $G'$ defined by $T$ restricted to $G'_i$, for each $i\in [\tau]$. Observe that because we have $\tau+1$ cut edges, we get that each $T_i$ contains at most~2 of them, and in case it contains exactly~2, say $v_1u,uv_2$, then $T_{i+1}$ either does not contain any cut edge, or must intersect $T_i$ in $v_1u,uv_2$. This means that the best we can do in order to finish by time $\tau$ is to visit exactly two of them in the first snapshot, and exactly one more in each of the subsequent snapshots. We can therefore suppose, without loss of generality that $T_i$ contains $v_iu,v_{i+1}u$ for each $i\in [\tau]$. Note that this implies that every edge of $(G,[2])$ must be visited in $T_1$ and $T_2$, with $T_1$ starting in $v_2$ and $T_2$ finishing in $v_2$, as we wanted to prove. 

Finally, note that $(G',[\tau])$ constructed above has an \localtrail if and only if $(G',\tau+1)$ has an \localtour. This completes our proof.
\end{proof}

\section{Eulerian Tours and Trails}

We finally focus on \ptrail and \ptour, proving that in the general case they are both \NP-complete, hence, proving Item~\ref{item:three} in Theorem~\ref{thm:main}. To this aim, we make an adaptation of the construction in Theorem~\ref{thm:localtour}. Observe that here the base graph needs to be Eulerian as otherwise the answer to \ptrail is trivially NO. This also implies that the problem restricted to dynamic-based graphs is trivial: if the base graph is Eulerian, then the answer to \ptour is YES; otherwise, then the answer is NO. The trick now is to take advantage of the function $\lambda$ in order to enforce the edges. 

\begin{theorem}
\ptour and \ptrail are $\NP$-complete, even on temporal graphs with fixed lifetime~$\tau\ge 2$.
\end{theorem}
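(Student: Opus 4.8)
The plan is to prove membership in $\NP$ and then reduce from NAE 3-\SAT, reusing the gadget architecture of Theorem~\ref{thm:localtour} but exploiting $\lambda$ to replace the ``snapshot'' semantics of the local tour by a genuine once-per-edge traversal. Membership is immediate: a temporal trail is a sequence of at most $m$ labelled edges, and one checks in polynomial time that consecutive edges are adjacent, that the chosen times are non-decreasing, that no edge is repeated, and that every edge of $G$ occurs. For the hardness, recall that in the local tour the two trails $T_1,T_2$ live in the two always-available snapshots and may share edges, the shared edges being exactly those inside the forced-edge gadgets; the logical content is carried by the fact that for each variable exactly one of the paths $P_i,\overline{x}_i$'s path is used by $T_1$ and the other by $T_2$. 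Since here every edge must be used \emph{exactly} once, I would instead let a single temporal trail of lifetime $2$ factor into a ``time-$1$ part'' followed by a ``time-$2$ part'' (forced by the non-decreasing time condition), and make these two parts play the roles of $T_1$ and $T_2$.

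First I would replace each forced-edge gadget between $u$ and $v$ by a \emph{time-split} gadget consisting of two internally disjoint $u,v$-paths of length two, one whose edges are available only at time $1$ and the other only at time $2$. Because each internal vertex has degree $2$ and each path is confined to its own timestamp, any Eulerian temporal trail must traverse $u$ to $v$ once within time $1$ and once within time $2$, using disjoint edges each time; this reproduces, edge-disjointly, the effect that a forced edge had of ``appearing in both $T_1$ and $T_2$''. All remaining edges---those of the variable paths $P_i,\overline{P}_i$, the clause-gadget diagonals, and the literal-entry edges---are made available at both times, so that an Eulerian trail is free to assign each of them to the time-$1$ part or to the time-$2$ part. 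Writing $T_1$ (resp.\ $T_2$) for the set of edges traversed at time $1$ (resp.\ $2$), the non-decreasing-time condition guarantees that $T_1$ is an initial trail and $T_2$ a final trail, and covering every edge exactly once means precisely that $(T_1,T_2)$ partitions $E(G)$ into two trails that meet at the time boundary.

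Next I would argue the equivalence with the NAE formula. The key conceptual shift from Theorem~\ref{thm:localtour} is that, whereas there the assignment was read off from \emph{which} of $P_i,\overline{P}_i$ a trail used, here both paths are always traversed and the assignment is read off from \emph{in which timestamp} each is traversed. I would keep the degree-$3$ entry vertex $I_i$ and the chain of (now time-split) forced edges so that, exactly as in the analysis of conditions (I) and (II) in the proof of Theorem~\ref{thm:localtour}, the two edges incident to each literal entry are forced into the same timestamp, and the balance condition at $I_i$ forces $P_i$ to lie entirely in one timestamp and $\overline{P}_i$ entirely in the other; setting $x_i$ true iff $P_i$ is traversed at time $1$ then yields a truth assignment, while the clause gadget---whose internal routing is dictated by which literal entries are ``time-$1$'' and which are ``time-$2$''---is traversable exactly once iff the clause is not-all-equal satisfied. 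Finally, to make $G$ Eulerian I would either identify the two degree-one vertices $s_1,s_2$ (for \ptour) or keep them as the unique pair of odd vertices and let them be the endpoints of the trail at the time boundary (for \ptrail), adding pendant padding and light degree corrections so that $G$ is connected with the correct parity; extending to any fixed $\tau\ge 2$ is done by appending a path of time-split segments that consumes the extra timestamps, and the same construction with $s_1,s_2$ joined gives the statement for \ptour.

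The step I expect to be the main obstacle is the Eulerian bookkeeping intertwined with the timing semantics: one must simultaneously guarantee that the base graph is connected and has the correct degree parity (so that an Eulerian trail/tour is not excluded for trivial reasons), that no unintended shortcut across the time-split gadgets lets a trail cover an edge in the wrong layer, and that the shift from ``which path'' to ``which layer'' still forces $P_i$ and $\overline{P}_i$ into opposite timestamps. Verifying this last forcing---now that both paths must be fully used rather than just one---is the delicate point, and it is where a careful re-derivation of conditions analogous to (I) and (II) from Theorem~\ref{thm:localtour} is required.
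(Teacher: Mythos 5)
Your proposal matches the paper's proof essentially step for step: the paper likewise replaces each forced-edge gadget by two length-2 paths $P^1_{uv},P^2_{uv}$ whose edges are active only at time~1 and time~2 respectively, keeps all other edges available at both times, carries over the argument of Theorem~\ref{thm:localtour} unchanged (since it relied only on forced edges appearing in both $T_1$ and $T_2$, which the time-split paths now realize edge-disjointly), and handles fixed $\tau\ge 2$ and the closed variant by appending a pendant path/cycle consuming the extra timestamps. One small remark: your ``key conceptual shift'' is not actually a shift --- in the local-tour proof both $P_i$ and $\overline{P}_i$ were already traversed, one per snapshot, with the truth value read off from which snapshot contains $P_i$, which is precisely why conditions (I) and (II) transfer verbatim and the forcing you worry about requires no new argument.
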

\begin{proof}
We first prove the case $\tau=2$. For this, we simply replace the gadget to enforce an edge $uv$ in the construction of Section~\ref{sec:localtrail} by two paths of length~2, $P^1_{uv}$ and $P^2_{uv}$, where the edges in $P^i_{uv}$ are active only in snapshot $G_i$, for each $i\in [2]$. Because the arguments used in Section~\ref{sec:localtrail} depended only on the fact of $uv$ be indeed an enforced edge, we can apply the same arguments here. The only difference is that the trails in $G_1$ and $G_2$ now cannot intersect, which indeed is the case since the intersection between $T_1$ and $T_2$ in Section~\ref{sec:localtrail} is exactly the set of forced edges, and since here each appearance of a forced edge $uv$ is actually related either to $P^1_{uv}$ or to $P^2_{uv}$.

Now, in order to prove the $\NP$-completeness for higher values of $\tau$, we can simply add new vertices $v_3,\cdots,v_{\tau}$ and edges $\{s_1v_3\}\cup \{v_iv_{i+1}\mid i\in \{3,\cdots,\tau\}\}$. This gives us that \ptrail is $\NP$-complete on $(G,\lambda)$ with lifetime $\tau$ for every fixed $\tau\ge 2$. And if we want a closed trail, it suffices to identify $v_\tau$ with $s_1$, if $\tau\ge 4$, and if $\tau=3$, we add a new vertices $v_4$ and edges $v_3v_4,v_4s_1$ active in snapshot $G_3$. This concludes our proof.
\end{proof}

%%%%%%%%%%%%%%%%%%%%%%%%%%%%%%%%%%%%%%%%%%%%%%%%%%%%%%%%%%%%%%%%%%%%%%%%%%%%%%%%%%%%%%%%%%%%%%%%%%%%%%%%%%%%%%%%%%%%%%%%%%%%%%%%%%%%%%%%%%%%%%%%%%%%%%%%%%%%%%%%%
%%%%%%%%%%%%%%%%%%%%%%%%%%%%%%%%%%%%%%%%%%%%%%%%%%%%%%%%%%%%%%%%%%%%%%%%%%%%%%%%%%%%%%%%%%%%%%%%%%%%%%%%%%%%%%%%%%%%%%%%%%%%%%%%%%%%%%%%%%%%%%%%%%%%%%%%%%%%%%%%%
%%%%%%%%%%%%%%%%%%%%%%%%%%%%%%%%%%%%%%%%%%%%%%%%%%%%%%%%%%%%%%%%%%%%%%%%%%%%%%%%%%%%%%%%%%%%%%%%%%%%%%%%%%%%%%%%%%%%%%%%%%%%%%%%%%%%%%%%%%%%%%%%%%%%%%%%%%%%%%%%%

\bibliographystyle{plain}
\bibliography{references.bib}

\begin{thebibliography}{10}

\bibitem{arumugam2013decomposition}
S~Arumugam, I~Hamid, and VM~Abraham.
\newblock Decomposition of graphs into paths and cycles.
\newblock {\em Journal of Discrete Mathematics}, 2013, 2013.

\bibitem{BorgnatFGMRS07}
Pierre Borgnat, Eric Fleury, Jean{-}Loup Guillaume, Cl{\'{e}}mence Magnien,
  C{\'{e}}line Robardet, and Antoine Scherrer.
\newblock Evolving networks.
\newblock In {\em Mining Massive Data Sets for Security}, pages 198--203, 2007.

\bibitem{CFQS12}
Arnaud Casteigts, Paola Flocchini, Walter Quattrociocchi, and Nicola Santoro.
\newblock Time-varying graphs and dynamic networks.
\newblock {\em International Journal of Parallel, Emergent and Distributed
  Systems}, 27(5):387--408, 2012.

\bibitem{CasteigtsHMZ20}
Arnaud Casteigts, Anne{-}Sophie Himmel, Hendrik Molter, and Philipp Zschoche.
\newblock Finding temporal paths under waiting time constraints.
\newblock In Yixin Cao, Siu{-}Wing Cheng, and Minming Li, editors, {\em 31st
  International Symposium on Algorithms and Computation, {ISAAC} 2020, December
  14-18, 2020, Hong Kong, China (Virtual Conference)}, volume 181 of {\em
  LIPIcs}, pages 30:1--30:18. Schloss Dagstuhl - Leibniz-Zentrum f{\"{u}}r
  Informatik, 2020.

\bibitem{ccodur2020time}
Merve~Kayac{\i} {\c{C}}odur and Mustafa Y{\i}lmaz.
\newblock A time-dependent hierarchical chinese postman problem.
\newblock {\em Central European Journal of Operations Research},
  28(1):337--366, 2020.

\bibitem{cygan2014parameterized}
Marek Cygan, D{\'a}niel Marx, Marcin Pilipczuk, Micha{\l} Pilipczuk, and
  Ildik{\'o} Schlotter.
\newblock Parameterized complexity of eulerian deletion problems.
\newblock {\em Algorithmica}, 68(1):41--61, 2014.

\bibitem{dahlhaus1994complexity}
Elias Dahlhaus, David~S. Johnson, Christos~H. Papadimitriou, Paul~D. Seymour,
  and Mihalis Yannakakis.
\newblock The complexity of multiterminal cuts.
\newblock {\em SIAM Journal on Computing}, 23(4):864--894, 1994.

\bibitem{Erlebach0K15}
Thomas Erlebach, Michael Hoffmann, and Frank Kammer.
\newblock On temporal graph exploration.
\newblock In Magn{\'{u}}s~M. Halld{\'{o}}rsson, Kazuo Iwama, Naoki Kobayashi,
  and Bettina Speckmann, editors, {\em Automata, Languages, and Programming -
  42nd International Colloquium, {ICALP} 2015, Kyoto, Japan, July 6-10, 2015,
  Proceedings, Part {I}}, volume 9134 of {\em Lecture Notes in Computer
  Science}, pages 444--455. Springer, 2015.

\bibitem{erlebach2018faster}
Thomas Erlebach and Jakob~T Spooner.
\newblock Faster exploration of degree-bounded temporal graphs.
\newblock In {\em 43rd International Symposium on Mathematical Foundations of
  Computer Science (MFCS 2018)}. Schloss Dagstuhl-Leibniz-Zentrum fuer
  Informatik, 2018.

\bibitem{ErlebachS20}
Thomas Erlebach and Jakob~T. Spooner.
\newblock Non-strict temporal exploration.
\newblock In Andrea~Werneck Richa and Christian Scheideler, editors, {\em
  Structural Information and Communication Complexity - 27th International
  Colloquium, {SIROCCO} 2020, Paderborn, Germany, June 29 - July 1, 2020,
  Proceedings}, volume 12156 of {\em Lecture Notes in Computer Science}, pages
  129--145. Springer, 2020.

\bibitem{fomin2014long}
Fedor~V Fomin and Petr~A Golovach.
\newblock Long circuits and large euler subgraphs.
\newblock {\em SIAM Journal on Discrete Mathematics}, 28(2):878--892, 2014.

\bibitem{garey1976planar}
Michael~R Garey, David~S. Johnson, and R~Endre Tarjan.
\newblock The planar hamiltonian circuit problem is np-complete.
\newblock {\em SIAM Journal on Computing}, 5(4):704--714, 1976.

\bibitem{gljan1962graphic}
MEI~Gu GLJAN.
\newblock Graphic programming using odd or even points, chinese j.
\newblock {\em Mufh}, 1:273--277, 1962.

\bibitem{gomez2018covering}
Renzo G{\'o}mez and Yoshiko Wakabayashi.
\newblock Covering a graph with nontrivial vertex-disjoint paths: existence and
  optimization.
\newblock In {\em International Workshop on Graph-Theoretic Concepts in
  Computer Science}, pages 228--238. Springer, 2018.

\bibitem{KKK00}
David Kempe, Jon Kleinberg, and Amit Kumar.
\newblock Connectivity and inference problems for temporal networks.
\newblock In {\em STOC '00: Proceedings of the thirty-second annual ACM
  symposium on Theory of computing}, 2000.

\bibitem{LVM.18}
Matthieu Latapy, Tiphaine Viard, and Cl\'emence Magnien.
\newblock Stream graphs and link streams for the modeling of interactions over
  time.
\newblock {\em Social Network Analysis and Mining}, 8:61, 2018.

\bibitem{manuel2018revisiting}
Paul Manuel.
\newblock Revisiting path-type covering and partitioning problems.
\newblock {\em arXiv preprint arXiv:1807.10613}, 2018.

\bibitem{M15}
Othon Michail.
\newblock An introduction to temporal graphs: An algorithmic perspective.
\newblock {\em Internet Mathematics}, 12(4):239--280, 2016.

\bibitem{michail2016traveling}
Othon Michail and Paul~G Spirakis.
\newblock Traveling salesman problems in temporal graphs.
\newblock {\em Theoretical Computer Science}, 634:1--23, 2016.

\bibitem{orlin1984some}
James~B Orlin.
\newblock Some problems on dynamic/periodic graphs.
\newblock In {\em Progress in Combinatorial Optimization}, pages 273--293.
  Elsevier, 1984.

\bibitem{SQFCA11}
Nicola Santoro, Walter Quattrociocchi, Paola Flocchini, Arnaud Casteigts, and
  Fr{\'{e}}d{\'{e}}ric Amblard.
\newblock Time-varying graphs and social network analysis: Temporal indicators
  and metrics.
\newblock {\em CoRR}, abs/1102.0629, 2011.

\bibitem{sun2011dynamic}
Jinghao Sun, Guozhen Tan, and Honglei Qu.
\newblock Dynamic programming algorithm for the time dependent chinese postman
  problem.
\newblock {\em Journal of Information and Computational Science}, 8:833--841,
  2011.

\bibitem{wang2002time}
Hsiao-Fan Wang and Yu-Pin Wen.
\newblock Time-constrained chinese postman problems.
\newblock {\em Computers \& Mathematics with applications}, 44(3-4):375--387,
  2002.

\bibitem{WCHLX14}
Huanhuan Wu, James Cheng, Silu Huang, Yiping Ke, Yi~Lu, and Yanyan Xu.
\newblock Path problems in temporal graphs.
\newblock {\em Proceedings of the VLDB Endowment}, 7(9):721--732, 2014.

\end{thebibliography}

\appendix

\section{A brief summary about Eulerian tours in dynamic graphs}
\label{app:rel}
A \emph{dynamic graph} is a pair $(G,T)$ where $G$ is a finite digraph and $T$ is a function $T:E(G)\rightarrow \Z$, called \emph{transit time function}. A dynamic graph can also be seen as a special type of infinite digraph ${\cal G}$, where $V({\cal G}) = V(G)\times \Z$, and $(u,i)(v,j)\in E(\cal{G})$ if and only if $uv\in E(G)$ and $T(uv) = j-i$. Observe that the transit time of an arc can also be negative, and therefore there might exist arcs going from a vertex $(u,i)$ to a vertex $(v,j)$ with $j<i$, which in the temporal graph context would be considered as going back in time. 
An \emph{Eulerian trail} in $(G,T)$ is a trail that passes through all the edges of ${\cal G}$. More formally, it is a function $f:\Z\rightarrow V(G)\times \Z$ such that $f(i) f(i+1)\in E({\cal G})$ for every $i\in \Z$, and for every $(u,i)(v,j)\in E({\cal G})$, there exists a unique $\ell$ such that $(u,i)(v,j)$ is equal to $f(\ell)f({\ell+1})$. 

Recall that a digraph $G$ is Eulerian if and only: (i) $G$ has at most one non-trivial component; and (ii) the indegree of $u$, denoted with $d^-(u)$, is equal to its outdegree, denoted as $d^+(u)$, for every $u\in V(G)$. Observe that these conditions are also trivially necessary for the infinite case. Also, note that, given $u\in V(G)$, each in-arc $uv$ of $G$ incident to $u$ gives rise to exactly one in-arc $(u,i)(v,i+T(u,v))$ incident to $(u,i)$, for every $i\in \Z$. The same clearly holds for every out-arc. Therefore, one can see that ${\cal G}$ satisfies (i) and (ii) if and only if  $G$ satisfies (i) and (ii). However, as proved in~\cite{orlin1984some}, these are not the only necessary conditions. Nevertheless, a characterization is still possible, with an additional, also easy to test, condition.

\begin{theorem}[\cite{orlin1984some}]
Let $(G,T)$ be a dynamic graph. Then $(G,T)$ has an Eulerian trail if and only the following conditions hold:
\begin{enumerate}
    \item $d^-(u) = d^+(u)$ for every $u\in V(G)$;
    \item $G$ is connected; and
    \item $\sum_{e\in E(G)}T(e) \in \{-1,1\}$
\end{enumerate}
\end{theorem}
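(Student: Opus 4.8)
The plan is to treat conditions (i) and (ii) as the easy, already-justified part — the excerpt observes that balance and connectivity are inherited between $G$ and $\mathcal{G}$ and are necessary for any (bi-infinite) Eulerian trail — and to concentrate the work on condition (iii) and on the sufficiency direction. Throughout I would write $W = \sum_{e\in E(G)} T(e)$, and for an Eulerian trail $f$ let $t(\ell)$ denote the time coordinate of $f(\ell)$, so that $t(\ell+1)-t(\ell) = T(e_\ell)$, where $e_\ell\in E(G)$ is the edge whose copy is traversed by the arc $f(\ell)f(\ell+1)$.

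For the necessity of (iii), I would first observe that every vertex $(u,i)$ of $\mathcal{G}$ is visited only finitely often by $f$ (exactly $d^+(u)=d^-(u)$ times, since each of its finitely many incident arcs is used once). Hence for every bound $N$ only finitely many indices $\ell$ satisfy $|t(\ell)|\le N$; as $|t(\ell+1)-t(\ell)|\le \max_e|T(e)|$ is bounded, $t(\ell)$ cannot jump across a wide band, so $t(\ell)$ tends to $+\infty$ or to $-\infty$ as $\ell\to+\infty$, and likewise as $\ell\to-\infty$. Next I would count, for a fixed level $k$, the edge-copies of $\mathcal{G}$ crossing the cut that separates levels $\le k$ from levels $\ge k+1$: an edge $uv$ contributes $\max(T(uv),0)$ upward-crossing copies and $\max(-T(uv),0)$ downward-crossing copies, so the net upward flux across the cut is exactly $\sum_{uv}T(uv)=W$, and only finitely many copies cross. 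Because $f$ uses each such copy exactly once, the net number of upward crossings of the cut by $f$ equals $W$; on the other hand, a telescoping of the crossings of any finite window $[-M,M]$ equals $\mathbf{1}[t(M)\ge k+1]-\mathbf{1}[t(-M)\ge k+1]$, which converges as $M\to\infty$ to a value in $\{-1,0,1\}$ determined by the two limiting signs of $t$. Thus $W\in\{-1,0,1\}$. Finally I would rule out $W=0$: this forces both ends of the trail to the same infinity, so $t$ is bounded on one side and $f$ avoids all levels beyond that bound; but $\mathcal{G}$ has edge-copies at every level, contradicting that $f$ is Eulerian (here one uses $E(G)\neq\emptyset$). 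Hence $W\in\{-1,1\}$.

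For sufficiency, assume (i) and (ii). By reversing the direction of traversal — equivalently, passing to the transpose digraph with transit times negated, which sends $W$ to $-W$ while preserving (i) and (ii) — it suffices to treat $W=1$. Conditions (i) and (ii) make $G$ a weakly connected, balanced digraph, hence it admits a closed Eulerian tour $C$ visiting each edge of $G$ exactly once. Lifting $C$ into $\mathcal{G}$ from a basepoint $(u,k)$ yields a walk ending at $(u,k+W)=(u,k+1)$, since the cumulative transit along $C$ is $W=1$; moreover, if $\delta_{uv}$ denotes the cumulative transit of $C$ up to the unique occurrence of $uv$, this lift traverses the copy of $uv$ at level $k+\delta_{uv}$. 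Concatenating these lifts over all $k\in\Z$ produces a bi-infinite walk $f$, since the lift from level $k$ ends exactly where the lift from level $k+1$ begins. As $k$ ranges over $\Z$, the index $k+\delta_{uv}$ ranges over all of $\Z$ exactly once, so each copy of each edge of $G$ in $\mathcal{G}$ is traversed exactly once; thus $f$ is the desired Eulerian trail.

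The main obstacle is the necessity of (iii), and within it the two qualitative facts that make the flux computation rigorous: that $t(\ell)$ stabilizes to a single sign at each end (so the telescoped limit is well defined), and the exclusion of $W=0$. The sufficiency direction is comparatively routine once the lifted-tour viewpoint is adopted, with $W=1$ playing the decisive role of guaranteeing that the per-tour level shift is exactly one, so that the lifts of a single Eulerian tour of $G$ tile the levels of $\mathcal{G}$ without gaps or overlaps.
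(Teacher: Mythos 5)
Your proposal is correct, and it does substantially more than the paper itself, which treats this as a cited result (Orlin, 1984): the paper's appendix only sketches the sufficiency direction via the lifted-tour picture and explicitly defers the ``more technical'' necessity argument to Orlin. Your sufficiency proof is the same lifting idea as the paper's sketch, but made rigorous in a useful way: the observation that the copy of $uv$ traversed by the lift based at level $k$ sits at level $k+\delta_{uv}$, so that $k\mapsto k+\delta_{uv}$ is a bijection of $\Z$, is exactly the point that turns ``just in time to apply the same process'' into a proof that the lifts tile $\mathcal{G}$ without gaps or overlaps; the reduction of $W=-1$ to $W=1$ by transposition is also sound. Where you genuinely diverge is necessity of condition (3), which the paper does not prove at all: your flux argument --- counting that the cut between levels $\le k$ and $\ge k+1$ is crossed by exactly $\max(T(uv),0)$ upward and $\max(-T(uv),0)$ downward copies of each edge, so the net flux is $W$ and is finite; telescoping over a finite window to get a limit in $\{-1,0,1\}$ via the single-sign behaviour of $t(\ell)$ at each end; and excluding $W=0$ because a trail with both ends at the same infinity has levels bounded on one side and so misses edge-copies --- is elementary and self-contained, and each of its supporting steps (finitely many visits per vertex of $\mathcal{G}$, bounded step size forcing an eventual sign, the indicator telescoping identity) checks out. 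What the paper's choice buys is brevity (it is an appendix summary of someone else's theorem); what yours buys is a complete proof from first principles that exposes condition (3) as a simple conservation-of-flux constraint, which is arguably more illuminating than the citation. The only points worth making explicit in a final write-up are the two you already flag in passing: the existence of a trail presupposes $E(G)\neq\emptyset$, and for $W=\pm 1$ the two ends of the trail necessarily go to opposite infinities, which is what makes the telescoped limit equal $\pm 1$ rather than $0$.
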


Even if the necessary part of the proof is more technical, the sufficiency part is quite natural, because condition (3) tells us that, given an eulerian tour $T= (v_1,\cdots,v_{m},v_1)$ of $G$ and fixing a time $i$, we can use $T$ to traverse all the edges incident to $(v_1,i)$ in a way that we arrive in $(v_1,i+1)$ (or $(v_1,i-1)$ if the sum is $-1$) just in time to apply the same process to $(v_1,i+1)$. Because $(v_1,i)$ is chosen arbitrarily, we are ensured to visit all edges of ${\cal G}$.

\section{Proofs}

\subsection{Proof of Lemma~\ref{lem:easy}}
\label{sec:easy}
\begin{proof}
Let $G_1,\cdots, G_\tau$ be the snapshots of $G$; note first that if $E(G_i)$ is empty, then this snapshot can be suppressed. Our problem reduces to deciding whether there is a choice of connected components $H_1, \ldots, H_\tau$, one for each timestamp $i$, that together cover all the edges of $G$ and is such that $H_i$ intersects $H_{i+1}$, for each $i\in [\tau-1]$. As for each $i\in[\tau]$, there are at most $n$ nodes in the intersections, there are at most $O(n^{\tau-1})$ choices. For each choice the test can be done in $O(\tau(n+m))$, obtaining $O(\tau(n+m)n^{\tau-1})$ running time, which is $O((n+m)n^{\tau-1})$.
\end{proof}

\subsection{Proof of Lemma~\ref{lem:oddVertices}}
\label{app:oddVertices}
\begin{proof}
For each $i\in [2]$, denote by $T_i$ the trail in $G$ equal to $T$ restricted to timestamp $i$, and suppose, by contradiction, that $u\in V(G)$ is a vertex with odd degree not contained in $T_1$. Because $T$ is a temporal tour, observe that $T_1$ is a trail in $G$ starting at some $s$ and finishing at some $t$, and $T_2$ is a trail in $G$ starting at $t$ and finishing at $s$, with possibly $s=t$. This means that the subgraph of $G$ formed by the edges of $T_2$ is such that every $x\in V(G)\setminus \{s,t\}$ has even degree. This is a contradiction because, since no edge incident to $u$ is visited in $T_1$, we get that all the edges incident to $u$ must be visited in $T_2$, i.e., $u$ would have odd degree in $T_2$. The same argument holds in case $u$ is not in $T_2$, and the lemma follows.
\end{proof}

\end{document}